\def\IEEEsubmission{0}
\def\isConference{0}

\if\isConference1
	\documentclass[conference]{IEEEtran}
\else
	\if\IEEEsubmission1
	\documentclass[journal,12pt,onecolumn,draftclsnofoot]{IEEEtran}
	\else
	\documentclass[journal]{IEEEtran}
	\fi
\fi

\usepackage{makecell}
\usepackage{mathtools}
\usepackage{amsmath,amsfonts,amssymb,amsthm}
\usepackage{algorithmic}
\usepackage{algorithm}
\usepackage{array}
\usepackage{acronym}
\usepackage[caption=false,font=footnotesize]{subfig}
\usepackage{textcomp}
\usepackage{stfloats}
\usepackage{url}
\usepackage{verbatim}
\usepackage{color}
\usepackage{graphicx}
\usepackage{cite}
\def\figuresizeBB{5.2in}

\def\figuresize{2.8in}

\def\figuresizeSS{1.7in}
\def\figuresizeSSS{1.13in}
\def\complexNumbers{\mathbb{C}}

\def\functionSpace[#1]{\mathcal{F}(#1)}
\def\realNumbers{\mathbb{R}}
\def\integers{\mathbb{Z}}
\def\constante{{\rm e}}
\def\constanti{{\rm i}}
\def\expectationOperator[#1][#2]{\mathbb{E}_{#2}[#1]}
\def\varianceOperator[#1]{\mathrm{Var}\left(#1\right)}
\def\uniformDistribution[#1][#2]{{\mathcal{U}_{[#1,#2]}}}
\def\traceOperator[#1]{{\mathrm{tr}}\{#1\}}
\def\identityMatrix[#1]{\mathrm{\textbf{I}}_{#1}}
\def\zeroVector[#1]{{ {{\mathbf{0}}}}_{#1}}
\def\oneVector[#1]{{ {\mathbf{1}}}_{#1}}
\def\exponentialIntegral[#1]{\mathrm{Ei}(#1)}

\def\indicatorFunction[#1]{\mathbb{I}[{#1}]}
\def\PDFimpairments[#1]{f_{\rm impairment}(#1)}

\def\clamp[#1][#2]{\text{clamp}_{#2}\left(#1\right)}
\def\signNormal[#1]{\text{sign}\left(#1\right)}
\def\diagOperation[#1]{\text{diag}\left\{#1\right\}}
\def\arithmeticMean[#1]{\overline{#1}}

\def\qfunction[#1]{\mathrm{Q}\left({#1}\right)}
\def\probability[#1]{\mathrm{Pr}\left({#1}\right)}
\def\complexGaussian[#1][#2]{\mathcal{CN}({#1,#2})}
\def\gaussian[#1][#2]{\mathcal{N}({#1,#2})}
\def\rayleigh[#1]{\mathrm{Rayleigh}({#1})}

\def\normalPDF[#1]{\phi\left(#1\right)}
\def\normalCDF[#1]{\Phi\left(#1\right)}

\def\CDF[#1][#2][#3]{F_{#1}^{#2}\left({#3}\right)}
\def\PDF[#1][#2][#3]{f_{#1}^{#2}\left({#3}\right)}

\def\complexGaussian[#1][#2]{\mathcal{CN}({#1,#2})}

\def\indexED{k}

\def\numberOfDigits{D}
\def\indexDigit{d}
\def\indexOACsymbolsElement{n}
\def\basis{{\boldsymbol\beta}}
\def\basisOpt{\hat{{\boldsymbol\beta}}}
\def\basisElement[#1]{\beta_{#1}}

\def\setOfParameters{\mathcal{X}}
\def\parameter[#1]{a_{#1}}

\def\indexParameters{q}
\def\numberOfParameters{Q}

\def\functionRange{\mathcal{Y}}
\def\fcnOutput[#1]{b_{#1}}
\def\indexFcnOutput{p}
\def\numberOfFcnOutputs{P}

\def\setOfSymbols{\mathcal{Q}}
\def\symbol[#1]{s_{#1}}
\def\symbolDetected[#1]{\hat{s}_{#1}}

\def\symbolSequence{\textit{\textbf{s}}}

\def\setOfCategories{\mathcal{P}}
\def\category{z}
\def\categoryDetected{\hat{z}}

\def\setOfHistogramsSub{\mathcal{T}_{K,Q,\numberOfDigits}}

\def\setOfHistograms[#1][#2]{\mathcal{T}_{#1,#2}}
\def\indexHistogram{r}
\def\histogram[#1]{{\textbf{t}_{#1}}}
\def\histogramDetected[#1]{{\hat{\textbf{t}}_{#1}}}

\def\mappingParameterToSymbol{\mathcal{M}_{\rm x}}

\def\mappingOutputToCategory{\mathcal{M}_{\rm y}}

\def\functionArgumentSequence{\textit{\textbf{x}}}
\def\functionArgument[#1]{x_{#1}}
\def\functionOutput{y}
\def\function{f}
\def\functionArbitrary[#1][#2]{\function_{#1}(#2)}

\def\functionSymbol{\tilde{f}}
\def\functionSymbolArbitrary[#1][#2]{\functionSymbol{#1}(#2)}

\def\functionHistogram{\tilde{f}_{\rm h}}
\def\functionHistogramArbitrary[#1]{\functionHistogram\left(#1\right)}

\def\functionDomain{\setOfParameters^\numberOfEdgeDevices}

\def\encoder{\epsilon}
\def\decoder{\delta}

\def\maxMultiplicty{R}

\def\functionOutputDetector{\hat{y}}

\def\numberOfEdgeDevices{K}
\def\numberOfEdgeDevicesForAGivenSymbol[#1]{K_{#1}}

\def\typeVector[#1]{\textbf{t}_{#1}}

\def\noiseVector{\textbf{n}}
\def\noiseVectorEle[#1]{\textbf{n}}

\def\setOfOACsymbols{\mathcal{C}}
\def\OACsymbolsMatrix{\textbf{C}}
\def\OACsymbolsMatrixBasis[#1]{\textbf{C}_{#1}}
\def\OACsymbolsMatrixOpt{\hat{\textbf{C}}}
\def\OACsymbolsMatrixUnnormalized{\hat{\textbf{C}}}
\def\OACsymbolUnnormalized[#1]{\hat{\textbf{c}}_{#1}}
\def\OACsymbolUnnormalizedElement[#1]{\hat{c}_{#1}}

\def\OACsymbolMean{\mu}
\def\OACsymbolStd{\sigma}
\def\OACsymbol[#1]{\textbf{c}_{#1}}
\def\OACsymbolSize{N}
\def\OACsymbolElement[#1]{c_{#1}}

\def\receivedSequence{\textbf{r}}

\def\superposedOACsymbol[#1]{\textbf{s}_{#1}}

\def\sequenceCategory[#1]{\textit{\textbf{l}}_{#1}}
\def\sequenceCategoryEle[#1]{l_{#1}}
\def\functionRepresentative{\phi}
\def\sequenceCategoryEle[#1]{l_{#1}}

\def\channel[#1]{h_{#1}}
\def\precoder[#1]{p_{#1}}
\def\compositeSymbol{\xi}
\def\compositeModel[#1]{\compositeSymbol_{#1}}

\def\noiseVariance{\sigma_{\text{n}}^2}

\def\targetAlignment{A}

 \def\aSet{\mathbb{M}}
\def\anElementOfSetEle[#1]{{m}_{#1}}
\def\anInjectiveMap{\psi}

\def\CER{\texttt{CER}}

\def\minimumDistance[#1]{d_{\rm min}\left\{{#1,\setOfHistograms[K][Q]}\right\}}
\def\indexIteration{i}
\def\updateRate{\eta}

\def\voronoiRegion[#1]{\mathcal{V}_{#1}}
\def\errorEvent[#1]{E_{#1}}

\def\triggerWaveform[#1]{{\mathtt{T}_{#1}}}
\def\timer[#1]{{\mathtt{C}_{#1}}}

\def\numberOfNodesForDetector{{K_{\text{mu}}}}
\def\setOfNodesForDetector{\mathcal{K}_{\text{mu}}}
\def\oneHotVector[#1]{\textbf{e}_{#1}}
\let\norm\undefined % <-- "Undefine" \norm
\DeclarePairedDelimiter\norm{\lVert}{\rVert}
\newcommand\mydots{\hbox to 1em{.\hss.\hss.}}

\newtheorem{definition}{Definition}

\newtheorem{proposition}{Proposition}

\makeatletter
\newif\ifAC@uppercase@first%
\def\Aclp#1{\AC@uppercase@firsttrue\aclp{#1}\AC@uppercase@firstfalse}%
\def\AC@aclp#1{%
	\ifcsname fn@#1@PL\endcsname%
	\ifAC@uppercase@first%
	\expandafter\expandafter\expandafter\MakeUppercase\csname fn@#1@PL\endcsname%
	\else%
	\csname fn@#1@PL\endcsname%
	\fi%
	\else%
	\AC@acl{#1}s%
	\fi%
}%
\def\Acp#1{\AC@uppercase@firsttrue\acp{#1}\AC@uppercase@firstfalse}%
\def\AC@acp#1{%
	\ifcsname fn@#1@PL\endcsname%
	\ifAC@uppercase@first%
	\expandafter\expandafter\expandafter\MakeUppercase\csname fn@#1@PL\endcsname%
	\else%
	\csname fn@#1@PL\endcsname%
	\fi%
	\else%
	\AC@ac{#1}s%
	\fi%
}%
\def\Acfp#1{\AC@uppercase@firsttrue\acfp{#1}\AC@uppercase@firstfalse}%
\def\AC@acfp#1{%
	\ifcsname fn@#1@PL\endcsname%
	\ifAC@uppercase@first%
	\expandafter\expandafter\expandafter\MakeUppercase\csname fn@#1@PL\endcsname%
	\else%
	\csname fn@#1@PL\endcsname%
	\fi%
	\else%
	\AC@acf{#1}s%
	\fi%
}%
\def\Acsp#1{\AC@uppercase@firsttrue\acsp{#1}\AC@uppercase@firstfalse}%
\def\AC@acsp#1{%
	\ifcsname fn@#1@PL\endcsname%
	\ifAC@uppercase@first%
	\expandafter\expandafter\expandafter\MakeUppercase\csname fn@#1@PL\endcsname%
	\else%
	\csname fn@#1@PL\endcsname%
	\fi%
	\else%
	\AC@acs{#1}s%
	\fi%
}%
\edef\AC@uppercase@write{\string\ifAC@uppercase@first\string\expandafter\string\MakeUppercase\string\fi\space}%
\def\AC@acrodef#1[#2]#3{%
	\@bsphack%
	\protected@write\@auxout{}{%
		\string\newacro{#1}[#2]{\AC@uppercase@write #3}%
	}\@esphack%
}%
\def\Acl#1{\AC@uppercase@firsttrue\acl{#1}\AC@uppercase@firstfalse}
\def\Acf#1{\AC@uppercase@firsttrue\acf{#1}\AC@uppercase@firstfalse}
\def\Ac#1{\AC@uppercase@firsttrue\ac{#1}\AC@uppercase@firstfalse}
\def\Acs#1{\AC@uppercase@firsttrue\acs{#1}\AC@uppercase@firstfalse}

\acrodef{WSN}{wireless sensor network}
\acrodef{USRP}{universal software radio peripheral}
\acrodef{SN}{sensor node}
\acrodef{FC}{fusion center}
\acrodef{MAC}{multiple-access channel}
\acrodef{FL}{federated learning}
\acrodef{ED}{edge device}
\acrodef{CS}{compressed sensing}
\acrodef{ES}{edge server}
\acrodef{DCN}{data center network}
\acrodef{RIS}{reconfigurable intelligent surfaces}
\acrodef{IMC}{in-memory computing}
\acrodef{FPGA}{field-programmable gate array}
\acrodef{SDR}{software-defined radio}
\acrodef{PS}{processing system}
\acrodef{SS}{soft synchronization}
\acrodef{IQ}{in-phase/quadrature}
\acrodef{IP}{intellectual property}
\acrodef{DMA}{direct-memory access}
\acrodef{RAM}{random access memory}
\acrodef{CC}{companion computer}
\acrodef{FEE}{function estimation error}
\acrodef{MSK}{minimum-shift keying}
\acrodef{TDMA}{time-domain multiple access}
\acrodef{PLNC}{physical-layer network coding}
\acrodef{UAV}{unmanned aerial vehicle}
\acrodef{LoRa}{Long-Range}
\acrodef{DC}{direct-current}
\acrodef{DAC}{digital-to-analog converter}
\acrodef{ADC}{analog-to-digital converter}
\acrodef{CS}{complementary sequence}
\acrodef{GCP}{Golay complementary pair}
\acrodef{ANF}{algebraic normal form}
\acrodef{AACF}{aperiodic auto-correlation function}
\acrodef{RM}{Reed-Muller}
\acrodef{MOCZ}{modulation on conjugate-reciprocal zeros}
\acrodef{BMOCZ}{binary modulation on conjugate-reciprocal zeros}
\acrodef{dizet}[DiZeT]{direct zero-testing}
\acrodef{PPDU}[PPDU]{physical-layer protocol data unit}

\acrodef{PCP}[PCP]{phase-coded pilot}
\acrodef{GPS}{Global Positioning System}
\acrodef{WSN}{wireless sensor network}
\acrodef{CORDIC}{coordinate rotation digital computer}
\acrodef{LAN}{local area network}
\acrodef{TDM}{time-domain multiplexing}

\acrodef{PUCCH}{physical uplink control channel}
\acrodef{PRACH}{physical random access channel}

\acrodef{OBO}{output-power back-off}
\acrodef{ACLR}{adjacent-channel-leakage ratio}

\acrodef{LDPC}{low-density parity check}

\acrodef{PDF}{probability density function}
\acrodef{CDF}{cumulative distribution function}
\acrodef{ISI}{inter-symbol interference}

\acrodef{TBMA}{type-based multiple access}

\acrodef{MSFE}{mean-squared function error}
\acrodef{FEE}{function-estimation error}
\acrodef{CER}{computation error rate}
\acrodef{BCER}{block-computation error rate}
\acrodef{CFO}{carrier frequency offset}
\acrodef{TO}{time offset}
\acrodef{PO}{phase offset}
\acrodef{RSSI}{received signal strength  information}

\acrodef{STLC}{space-time line code}
\acrodef{CCI}{co-channel interference}
\acrodef{CSIT}[CSIT]{\ac{CSI} at the transmitter}
\acrodef{CSIR}[CSIR]{\ac{CSI} at the receiver}
\acrodef{MIMO}{multiple-input-multiple-output}
\acrodef{PC}{phase correction}
\acrodef{ZF}{zero-forcing}
\acrodef{ANOVA}{analysis of variance}

\acrodef{PCA}{principal component analysis}
\acrodef{TIG}{Technical Interest Group}

\acrodef{FSK}{frequency-shift keying}
\acrodef{PPM}{pulse-position modulation}
\acrodef{PAM}{pulse-amplitude modulation}

\acrodef{MRC}{maximum-ratio combining}
\acrodef{HP}{hard-coded participation}
\acrodef{HPA}{hard-coded participation with absentees}
\acrodef{SP}{soft-coded participation}
\acrodef{FSK-MV}{\ac{FSK}-based \ac{MV}}
\acrodef{RF}{radio-frequency}
\acrodef{MF}{matched filter}
\acrodef{PPM}{pulse-position modulation}
\acrodef{CSK}{chirp-shift keying}
\acrodef{PPM-MV}[PPM-MV]{\ac{PPM}-based \ac{MV}}
\acrodef{DFT-s-OFDM}{discrete Fourier transform-spread orthogonal frequency-division multiplexing}
\acrodef{SC}{single-carrier}
\acrodef{SGD}{stochastic gradient descent}
\acrodef{signSGD}{sign stochastic gradient descent}

\acrodef{SL}{split learning}
\acrodef{SNR}{signal-to-noise ratio}
\acrodef{RMSE}{root-mean-square error}
\acrodef{OFDM}{orthogonal frequency-division multiplexing}
\acrodef{DFT}{discrete Fourier transform}
\acrodef{PSK}{phase-shift keying}
\acrodef{QAM}{quadrature amplitude modulation}
\acrodef{QPSK}{quadrature phase-shift keying}
\acrodef{PMEPR}{peak-to-mean envelope power ratio}
\acrodef{BER}{bit-error ratio}
\acrodef{SNR}{signal-to-noise ratio}
\acrodef{PSD}{power spectral density}
\acrodef{SE}{spectral efficiency}
\acrodef{CP}{cyclic prefix}
\acrodef{AWGN}{additive white Gaussian noise}
\acrodef{CFR}{channel frequency response}
\acrodef{CIR}{channel impulse response}
\acrodef{MMSE}{minimum mean-squared error}
\acrodef{LMMSE}{linear minimum mean-squared error}
\acrodef{BPSK}{binary phase shift keying}
\acrodef{QPSK}{quadrature phase shift keying}
\acrodef{BLER}{block-error rate}
\acrodef{ML}{machine learning}
\acrodef{MaxLike}[ML]{maximum likelihood}
\acrodef{PHY}{physical layer}
\acrodef{PA}{power amplifier}
\acrodef{UE}{user equipment}
\acrodef{BS}{base station}
\acrodef{IDFT}{inverse discrete Fourier transform}
\acrodef{DoF}{degrees-of-freedom}
\acrodef{IoT}{Internet-of-Things}
\acrodef{FDE}{frequency-domain equalization}
\acrodef{RF}{radio-frequency}
\acrodef{IM}{index modulation}
\acrodef{MF}{matched filter}
\acrodef{PPM}{pulse-position modulation}

\acrodef{MSE}{mean-squared error}
\acrodef{MRT}{maximum-ratio transmission}
\acrodef{ERC}{equal-ratio combining}
\acrodef{BAA}{broadband analog aggregation}
\acrodef{OBDA}{one-bit broadband digital aggregation}
\acrodef{FEEL}{federated edge learning}
\acrodef{FL}{federated learning}
\acrodef{UL}{uplink}
\acrodef{DL}{downlink}
\acrodef{OAC}{over-the-air computation}
\acrodef{TCI}{truncated-channel inversion}
\acrodef{MV}{majority vote}
\acrodef{CNN}{convolution neural network}
\acrodef{ReLU}{rectified-linear unit}
\acrodef{CSI}{channel state information}
\acrodef{PAPR}{peak-to-average power ratio}
\acrodef{SC}{single-carrier}
\acrodef{iid}[IID]{independent and identically distributed}
\acrodef{RMS}{root-mean-square}
\acrodef{4G}{Fourth Generation}
\acrodef{5G}{Fifth Generation}
\acrodef{NR}{New Radio}
\acrodef{LTE}{Long-Term Evolution}
\acrodef{OFDMA}{orthogonal frequency division multiple access}
\acrodef{ICI}{inter-carrier interference}
\acrodef{HARQ}{hybrid automatic repeat request}
\acrodef{D2D}{Device-to-Device}
\acrodef{NOMA}{non-orthogonal multiple access}
\acrodef{OMA}{orthogonal multiple access}

\acrodef{IMT}{International Mobile Telecommunications}
\acrodef{ITU}{International Telecommunication Union}

\acrodef{PDP}{power-delay profile}
\acrodef{TBMA}{type-based multiple access}
\acrodef{CDSF}{categorically-distinct symmetric function}

\begin{document}

\title{A Generic Multi-dimensional Symbol Construction for Digital Over-the-Air Computation and Practical Aspects
\thanks{Alphan~\c{S}ahin is with the Electrical  Engineering Department,
	University of South Carolina, Columbia, SC, USA. E-mail: asahin@mailbox.sc.edu}
}
\author{Alphan \c{S}ahin,~\IEEEmembership{Member,~IEEE}

}

\maketitle

\begin{abstract}
In this paper, we propose a general-purpose multi-dimensional symbol construction for computing an arbitrary symmetric function with digital \ac{OAC} and discuss the practical aspects of coherent aggregation. For our first contribution, we discuss the categorical representation of a symmetric function. By using this representation and leveraging the sufficiency of the histogram to evaluate a symmetric function, i.e., inspired by \ac{TBMA}, we introduce a general approach to design a single set of OAC symbols to compute \textit{any} digital function. For our second contribution, we use a comprehensive platform based on low-cost nodes that maintain synchronization in time, frequency, phase, and amplitude via a trigger mechanism, enabling coherent OAC experiments without \ac{GPS} or cable-based synchronization. Using measurements from the platform, we characterize the phase and amplitude statistics of the composite channel to derive a realistic impairment model for coherent OAC. Through a comprehensive analysis, we demonstrate the effectiveness of the proposed scheme under impairments captured by the proposed model.
\end{abstract}

\begin{IEEEkeywords}
Over-the-air computation, multi-dimensional constellation design, synchronization, software-defined radios
\end{IEEEkeywords}
\acresetall

\section{Introduction}
\Ac{OAC} is a physical layer (PHY) approach that harnesses naturally occurring signal superposition in a  \ac{MAC} to evaluate a mathematical function \cite{Nazer_2007}. With OAC, the fusion node does not acquire information from nodes over orthogonal wireless resources; instead, it aggregates information by overlapping their signals for function computation. Due to its potential benefits in improved resource utilization, it has been considered for a wide range of applications, such as federated learning, distributed localization, and wireless control systems \cite{sahinSurvey2023, perezneira2024waveformscomputingair}. However, despite ongoing research on OAC, there is  no consensus on whether OAC is a viable approach in practice, primarily because of its sensitivity to imperfections and the limited receiver-based processing, as the signal superposition occurs after the distortions.

In the literature, \ac{OAC} schemes can be mainly grouped by the discreteness of the transmitted information and the phase coherency of the aggregation. While the former distinguishes  between digital and analog function computation, the latter concerns how the \ac{CSI} is utilized at the nodes. Digital OAC essentially aims to evaluate a target function over a discrete alphabet that may represent quantized parameters. Hence, it is more compatible with the concepts in digital communications, and can be made resilient against noise in the channel, e.g., through heavy quantization \cite{Sahin_2022MVjournal}, error-correction codes \cite{goldenbaum2015nomographic}, and constellation optimization \cite{Saeed_2024channelComp}. The coherent \ac{OAC}, on the other hand, fundamentally relies on counteracting the distortion in the channel for phase-coherent superposition, whereas non-coherent solutions (see \cite{dahl2026unifiedframeworkunbiasednoncoherent,Goldenbaum_2013tcom,Sahin_2022MVjournal}, and the references therein) use a form of amplitude or index modulation without concerning phase coherency. In this work, we particularly focus on digital and coherent OAC.

One practical concern for digital OAC is that the network may be interested in computing a wide range of functions, and optimizing the OAC symbols can be challenging for each possible function (e.g., see \cite{Saeed_2024channelComp}). One way to address this issue is to use pre- and post- functions based on the Kolmogorov–Arnold representation theorem \cite{kolmogorov:superposition}, as mentioned in \cite{Saeed_sumComp,Saeed_ICASSPqam,azimiabarghouyi2026outofaircomputationenablingstructured}. However, designing these functions in general is not trivial. Furthermore, their non-linearity can degrade performance in the presence of noise. Secondly, it is desirable to use digital OAC with coherent superposition, as coherent superposition can significantly outperform its non-coherent counterpart in ideal conditions \cite{Yejin_2024}. However, it can be argued that its sensitivity to imperfections may offset these performance gains in practice. We then ask two fundamental questions regarding the digital coherent OAC: 
\begin{itemize}
	\item Is it possible to construct a \textit{single} generic set of \ac{OAC} symbols that is suitable for computing \textit{all} possible functions with a low \ac{CER}?
	\item What is the impairment model for coherent OAC if one considers the underlying communication protocol, mobility, hardware imperfections, and estimation errors?
\end{itemize}

To address the first question, we propose a multi-dimensional OAC symbol construction. We first discuss the categorical representation of a symmetric function to highlight the equivalency of different target functions. 
By leveraging the sufficiency histogram of symbols for evaluating a symmetric function, we target a construction in which the superposed OAC symbols enable the decoder to identify the histogram of the symbols and compute \textit{all} categorical representatives. By generalizing our earlier construction in \cite{sahinMILCOM2026}, inspired by \ac{TBMA} \cite{Mergen_2006tsp}, via a map that gives a unique representation over a set, we obtain an OAC symbol construction that can be adapted to various dimensions to achieve a higher computation rate. We then assess the trade-off between minimum distance, number of nodes, resource utilization, and parameter size, and derive the corresponding union bound for the \ac{CER}.

To address the second question, we develop a comprehensive testbed using low-cost off-the-shelf \acp{SDR} (i.e., Adalm Pluto) and host computers, where the nodes maintain time, frequency, phase, and amplitude synchronization through custom communication protocols while retaining  the flexibility of \acp{SDR}. For time and phase synchronization, we implement a flexible trigger  method in the \ac{FPGA} of the \acp{SDR} and use the \ac{PCP} strategy \cite{sahin_PIMRC2025}. By using this testbed, we obtain an impairment model for amplitude and phase distortions based on realistic measurements, without using any cable or \ac{GPS}-based synchronization. Furthermore, we develop an enhanced decoder that relies on a joint multi-user and histogram detection to mitigate \ac{CER} under impairments.

\subsection{Related Work}
Digital OAC has been studied extensively in various forms under different assumptions in the literature. For example, traditional constellations such as \ac{BPSK}, \ac{PAM}, and \ac{QAM}, along with several encoding techniques, are employed to compute \ac{MV} or sum functions in \cite{Guangxu_2021, Minjie_2022, Jha_2021, Liang_2022}. Mapping strategies for a generalized QAM constellation for the sum function are theoretically analyzed in \cite{Saeed_sumComp}. In \cite{Saeed_2024channelComp}, the OAC constellation design on the complex plane with the goal of computing functions beyond summation is studied, and several optimization techniques are explored for a given target function. The introduced techniques in \cite{Saeed_2024channelComp} are generalized to multiple dimensions in \cite{Xiaojing_2026}.

One way of extending 2D constellations to multiple dimensions is to exploit the representation of a parameter in a different number system. For instance, in \cite{Xiugang_2016}, a binary representation of parameters is considered, and each bit in the representation is mapped to a \ac{BPSK} symbol to compute the sum. In \cite{sahin2022md}, a balanced number system with \ac{FSK} is utilized for signed parameters, and the framework is extended to binary representation by using two's complement in \cite{Wang_2026}. Another important framework is \ac{TBMA} \cite{Mergen_2006tsp}. Originally, \ac{TBMA} was proposed for parameter estimation based on the histogram of transmitted OAC symbols. It can be implemented via multi-dimensional modulation techniques such as \ac{FSK} or \ac{PPM} \cite{Sahin_2022MVjournal}. It has been used in many studies to evaluate statistical measures (e.g., see \cite{Martinez_2025} and the discussion in \cite{perezneira2024waveformscomputingair}).

To increase the reliability of the computation, OAC is often considered with a lattice code. In fact, Nazer's pioneering work is based on a lattice code and uses modulo-lattice modulation, a \textit{semi-analog} scheme \cite{Kochman_2009} repeatedly with increasingly finer updates (see the proof of \cite[Theorem~3]{Nazer_2007} and also  \cite{Lan_2023}). In \cite{goldenbaum2015nomographic} and \cite{Azimi_2024}, a set of parameters is mapped to a point in a lattice for computation. A hierarchical transmission of the components of a nested lattice is proposed for digital function computation in  \cite{azimiabarghouyi2026outofaircomputationenablingstructured}. 
%We also note that the histogram in TBMA can be interpreted as a lattice point, allowing the fusion node to mitigate noise, as discussed in \cite{martinezgost2026exponentialnoiserobustnesstypebased}. 

In the state of the art, there are discussions regarding coherent and non-coherent aggregation, and their trade-offs (e.g., see \cite{sahinSurvey2023, Yejin_2024,dahl2026unifiedframeworkunbiasednoncoherent}, and the references therein). However, ultimately, OAC's performance is closely tied to its implementation and the underlying protocol. Although real-world demonstrations of OAC are available in the literature, e.g., see \cite{Liang_2022,  Kortke_2014, abari_2016oac, Guo_2021,sahinGC_2022, Xie_2022blockchain,zhu2025timelyparameterupdatingovertheair}, the results cannot be easily generalized or repurposed to assess other schemes. Furthermore, many demonstrations rely on cable-based synchronization, which weakens the plausibility of the experiments. The primary difficulty in demonstrating OAC is the need for synchronized \ac{UL} transmissions, which many off-the-shelf low-cost \acp{SDR} do not support. The researchers address this issue via custom designs, e.g., \cite{Abari2015Airshare, Xie_2022blockchain, Guo_2021, sahinGC_2022}, while introducing techniques to address impairments such as \ac{PO} and \ac{CFO}. Nonetheless, there is still a major gap in replicability and in flexible, low-cost testbeds, and, to the best of our knowledge, in the impairment models supported by realistic measurements for assessing an OAC scheme in practice.

\def\alength{N}
\def\vectorX{\textbf{x}}
{\em Notation:} The sets of complex numbers, real numbers, and integers are denoted by $\complexNumbers$,  $\realNumbers$, and $\integers$, respectively. The set of $\{0,1,\mydots,\alength-1\}$ is denoted by $[\alength]$.
%%The function $\signNormal[\cdot]$ results in $1$, $-1$, or $0$ for a positive, a negative, or a zero-valued argument, respectively. 
The $\alength$-dimensional all-zero vector and an $\alength\times \alength$ identity matrix are  $\zeroVector[{\alength}]$ and $\identityMatrix[{\alength}]$, respectively. 
%%The function $\indicatorFunction[\cdot]$ results in $1$ if its argument holds, otherwise, it is $0$. 
$\expectationOperator[\cdot][]$ is the expectation over all random variables. 
%%The operation $\diagOperator[\textbf{a}]$  returns a square diagonal matrix with the elements of vector \textbf{a} on the main diagonal.
%%$ \nabla \lossFunctionSample[{\modelParameters}]$ denotes the gradient of the function $f$, i.e. $\nabla f$, at the point $\modelParameters$. 
The zero-mean circularly symmetric multivariate  complex Gaussian distribution with the covariance matrix ${\textbf{\textrm{C}}_{\alength}}$ of an $\alength$-dimensional random vector $\vectorX\in\complexNumbers^{\alength\times1}$ is denoted by
$\vectorX\sim\complexGaussian[\zeroVector[\alength]][{\textbf{\textrm{C}}_{\alength}}]$.
%%The binomial distribution with the $K$ trials and the success probability  $p$ for each trial is $\binomDist[K][p]$.
%The uniform distribution with the support between $a$ and $b$ is $\uniformDistribution[a][b]$. 
%%Normal distribution with mean $\mu$ and variance $\sigma^2$ is $\gaussianDist[\mu][\sigma^2]$.
The Euler's number and $\sqrt{-1}$ are denoted by $\constante$ and   $\constanti$, respectively.
The complex conjugate of $x=a+\constanti b$ is $x^*=a-\constanti b$.
The arithmetic mean of the samples $\{x_k\}$ is denoted as $\arithmeticMean[x]$.
The $\ell_2$-norm of the vector $\vectorX$ is $\norm{\vectorX}_2$. 
%The cumulative distribution function of the standard normal distribution is $\normalCDF[\cdot]$. 
The function $\indicatorFunction[\cdot]$ results in $1$ if its argument holds; otherwise, it is $0$.
The probability of  an event $A$ is denoted by $\probability[A]$. 
%The conditional probability of an event $A$ given the event $B$ is shown as $\probability[A|B]$. $\oneVector[L]$ and $\zeroVector[L]$ denote the vectors of length $L$, where their elements are only $1$ or $0$, respectively.

\section{System Model}

Consider a scenario with $\numberOfEdgeDevices$ nodes and one fusion node with the goal of evaluating a target function $ \function:\functionDomain\rightarrow\functionRange\subset\realNumbers$, where $\setOfParameters\triangleq\{\parameter[0],\mydots\parameter[\numberOfParameters-1]|\parameter[\indexParameters]\in\realNumbers,\forall\indexParameters\in[\numberOfParameters]\}$ is the parameter space common for all nodes and $\functionRange\triangleq\{\fcnOutput[0],\mydots\fcnOutput[\numberOfFcnOutputs-1]|\fcnOutput[\indexFcnOutput]\in\realNumbers,\forall\indexFcnOutput\in[\numberOfFcnOutputs]\}$ is the range of the target function. 
Let $\functionArgument[\indexED]\in\setOfParameters$ and $\functionOutput\in\functionRange$ denote the parameter of the $\indexED$th node and the image of $\functionArgumentSequence\triangleq(\functionArgument[1],\mydots,\functionArgument[\numberOfEdgeDevices])$ under $\function$, i.e., $\functionOutput=\functionArbitrary[][{\functionArgument[1],\dots,\functionArgument[\numberOfEdgeDevices]}]$, respectively. In this study, we assume that  the target function is symmetric, i.e., $\functionArbitrary[][{\functionArgument[1],\dots,\functionArgument[\numberOfEdgeDevices]}]=\functionArbitrary[][{\functionArgument[\pi(1)],\dots,\functionArgument[\pi(\numberOfEdgeDevices)]}]$ for every permutation $\pi$ of $[1,\mydots,\numberOfEdgeDevices]$. Several examples of symmetric functions are maximum, minimum, majority vote, threshold function with uniform weights, sum, arithmetic mean, median, and product.

To obtain a framework independent from the target function's domain and range,
let $\mappingParameterToSymbol:\setOfParameters\rightarrow\setOfSymbols$ and  $\mappingOutputToCategory:\functionRange\rightarrow\setOfCategories$ be bijective mappings of the parameter space $\setOfParameters$ to the symbol space $\setOfSymbols\triangleq\{0,1,\mydots,\numberOfParameters-1\}$ and
  the range space $\functionRange$ to the category space $\setOfCategories\triangleq\{0,1,\mydots,\numberOfFcnOutputs-1\}$, respectively, with the definitions given by
$\indexParameters\triangleq\mappingParameterToSymbol(\parameter[\indexParameters])$ and	 	$\indexFcnOutput\triangleq\mappingOutputToCategory(\fcnOutput[\indexFcnOutput])$, 
 where $\indexParameters$ and $\indexFcnOutput$ denote a symbol and a category, i.e., the integer representations of the elements of $\setOfParameters$ and $\functionRange$, respectively.
   Let $\symbolSequence\triangleq(\symbol[1],\mydots,\symbol[\numberOfEdgeDevices])$ be a sequence, where  $\symbol[\indexED]=\mappingParameterToSymbol(\functionArgument[\indexED])$ is the symbol at the $\indexED$th node, $\forall\indexED$.
   We then define the \textit{transformed} target function $\functionSymbol:\setOfSymbols^\numberOfEdgeDevices\rightarrow\setOfCategories$ such that 
   \begin{align}
\category=\functionSymbolArbitrary[][{\symbol[1],\mydots,\symbol[\numberOfEdgeDevices]}]=\mappingOutputToCategory(\functionArbitrary[][{\functionArgument[1],\mydots,\functionArgument[\numberOfEdgeDevices]}])=\mappingOutputToCategory(\functionOutput)~.
   \end{align}
We make our discussions based on the transformed target function $\functionSymbol$, unless otherwise stated.

We assume that all nodes apply a common encoding procedure. To model this procedure, let $\encoder:\setOfSymbols\rightarrow\setOfOACsymbols$ be  a mapping of the symbol space $\setOfSymbols$ to the space of OAC symbols $\setOfOACsymbols\triangleq\{\OACsymbol[0],\mydots,\OACsymbol[\numberOfParameters-1]|\OACsymbol[\indexParameters]=[\OACsymbolElement[\indexParameters,0],\mydots,\OACsymbolElement[\indexParameters,\OACsymbolSize-1]]^{\rm T},\OACsymbolElement[\indexParameters,\indexOACsymbolsElement]\in\complexNumbers,\forall\indexParameters\in[\numberOfParameters],\forall\indexOACsymbolsElement\in[\OACsymbolSize]\}$, respectively, with the definition given by	$\OACsymbol[\indexParameters]\triangleq\encoder(\indexParameters)$,
 where $\OACsymbol[\indexParameters]$ is the $\indexParameters$th complex-valued $\OACsymbolSize$-dimensional \ac{OAC} symbol and $1/\numberOfParameters\times\sum_{\forall\indexParameters}\norm{\OACsymbol[\indexParameters]}_2^2=\OACsymbolSize$, assuming that the OAC symbols are equally likely. For encoding, the $\indexED$th node calculates the corresponding OAC symbol  as
\begin{align}
	\OACsymbol[{\symbol[\indexED]}]=\encoder(\symbol[\indexED])~.
\end{align}
All nodes  simultaneously transmit their OAC symbols over \ac{MAC}. The received sequence at the fusion node is then a superposition of all transmitted OAC symbols and  can be expressed as
\begin{align}
	\receivedSequence = \sum_{\indexED=1}^{\numberOfEdgeDevices} \channel[\indexED]\precoder[\indexED]\OACsymbol[{\symbol[\indexED]}]+\noiseVector= \targetAlignment\sum_{\indexED=1}^{\numberOfEdgeDevices}\OACsymbol[{\symbol[\indexED]}]+\noiseVector~,
	\label{eq:superposition}
\end{align}
where $\noiseVector\sim\complexGaussian[{\zeroVector[\OACsymbolSize]}][{\noiseVariance\identityMatrix[\OACsymbolSize]}]$ is the \ac{AWGN} vector, $\channel[\indexED]\sim\complexGaussian[0][1]$ is the channel coefficient between the fusion node and the $\indexED$th node, and $\precoder[\indexED]\in\complexNumbers$ is the precoder at the $\indexED$th node to counteract the channel distortion. In this work, we consider coherent \ac{OAC} and define the precoder $\precoder[\indexED]$ as $\channel[\indexED]^*/|\channel[\indexED]|^2\times \targetAlignment$, which aligns the amplitudes and phases of the OAC symbols to a predefined amplitude $\targetAlignment$ and $0$~radians, respectively, at the fusion node, leading to the right-hand side of \eqref{eq:superposition}. The value of $\targetAlignment$ can be determined by the fusion node %(e.g., based on the node with the smallest absolute channel coefficient)
 and fed back to the nodes before the superposition. 

In Section~\ref{sec:closingTheGap}, we extend the superposition model in \eqref{eq:superposition}  based on realistic measurements to capture the impact of imperfections on coherent aggregation. In our experiments, we maintain phase and amplitude synchronizations via \ac{PCP} \cite{sahin_PIMRC2025} and a closed-loop \ac{UL} power control, respectively, as we discussed comprehensively in Section~\ref{sec:closingTheGap}.

For decoding, the fusion node performs an operation to determine the function output as $ 	 \categoryDetected=\decoder(\receivedSequence)
 $, 
 where $\decoder:\complexNumbers^\OACsymbolSize\rightarrow\setOfCategories$ is a mapping of $\complexNumbers^\OACsymbolSize$ to the category space $\setOfCategories$ and  $\categoryDetected$ is the detected category. We discuss the decoder $\decoder$  along with the proposed construction in Section~\ref{sec:proposedConstruction}. 
  The detected category is finally mapped to the range of the target function as $\functionOutputDetector=\mappingParameterToSymbol^{-1}(\categoryDetected)$.   Finally, in this work, we define the fidelity metric  as  \ac{CER}, i.e.,
 $\CER = \probability[\categoryDetected\neq\category]$.   In \figurename~\ref{fig:systemModel}, we illustrate the constituents of our system model and the corresponding variables based on the discussions in this section.

% The main challenge that we aim to address in this work is  designing a \textit{generic} set of \ac{OAC} symbols that is suitable to compute \textit{all} possible functions with OAC with low \ac{CER}. 

\begin{figure*}[t]
	\centering
	\includegraphics[width = \figuresizeBB]{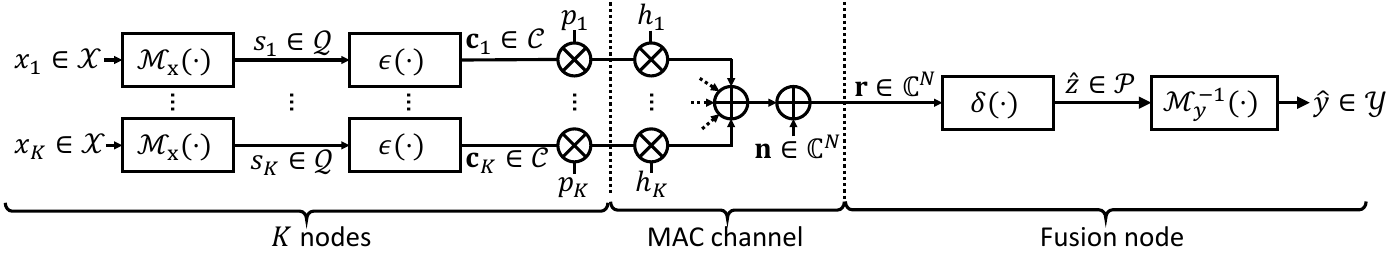}
	\caption{System model.}
	\label{fig:systemModel}
\end{figure*}

\section{Proposed Construction}
\label{sec:proposedConstruction}
In this section, we first discuss the categorical representation of a target function to get further insights into symmetric functions. We then discuss the proposed construction in detail. 

\subsection{Categorical representation of a target function}
The space $\setOfSymbols^\numberOfEdgeDevices$ has a finite number of elements, i.e., $\numberOfParameters^\numberOfEdgeDevices$, implying that the range of $\functionSymbol$, i.e.,  $\setOfCategories$, has also a finite number of elements. However, the symmetric function assumption further restricts $\setOfCategories$. This is because, to calculate the output of a symmetric function, one does not need to know the node index (i.e., the source/originator of the parameter), but only \textit{the occurrence of each symbol, i.e., the histogram of the symbols}. To make this statement more precise, let $\numberOfEdgeDevicesForAGivenSymbol[\indexParameters]$ be the number of nodes with the symbol $\indexParameters$, i.e., $\sum_{\indexED=1}^{\numberOfEdgeDevices}\indicatorFunction[{\symbol[\indexED]=\indexParameters}]$. 
%The histogram of the symbols is then the sequence  $(\numberOfEdgeDevicesForAGivenSymbol[0],\dots,\numberOfEdgeDevicesForAGivenSymbol[\numberOfParameters-1])$ and 
Then, there exists a function $\functionHistogram$ such that
\begin{align}
\functionSymbolArbitrary[][{\symbol[1],\mydots,\symbol[\numberOfEdgeDevices]}]=\functionHistogramArbitrary[{\numberOfEdgeDevicesForAGivenSymbol[0],\dots,\numberOfEdgeDevicesForAGivenSymbol[\numberOfParameters-1]}]~,
\end{align}
for $\forall\symbol[\indexED]\in\setOfSymbols$ and $ \forall\indexED\in[\numberOfEdgeDevices]$.

Let $\setOfHistograms[\numberOfEdgeDevices][\numberOfParameters]\triangleq\{\histogram[\indexHistogram]|\histogram[\indexHistogram]=[\numberOfEdgeDevicesForAGivenSymbol[\indexHistogram,0],\dots,\numberOfEdgeDevicesForAGivenSymbol[\indexHistogram,\numberOfParameters-1]]^{\rm T},\numberOfEdgeDevicesForAGivenSymbol[\indexHistogram,0]+\cdots+\numberOfEdgeDevicesForAGivenSymbol[\indexHistogram,\numberOfParameters-1]=\numberOfEdgeDevices,\forall\numberOfEdgeDevicesForAGivenSymbol[\indexHistogram,\indexParameters]\in[\numberOfEdgeDevices], r\in[\maxMultiplicty]\}$ be the set of the histograms for given $\numberOfEdgeDevices$ and $\numberOfParameters$.
\begin{proposition}[]
	The cardinality of $\setOfHistograms[\numberOfEdgeDevices][\numberOfParameters]$ is $\maxMultiplicty\triangleq\binom{\numberOfEdgeDevices+\numberOfParameters-1}{\numberOfParameters-1}$.
	\label{prop:maxCardinality}
\end{proposition}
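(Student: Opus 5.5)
The plan is the classical stars-and-bars argument. A histogram $\histogram[\indexHistogram]=[\numberOfEdgeDevicesForAGivenSymbol[\indexHistogram,0],\dots,\numberOfEdgeDevicesForAGivenSymbol[\indexHistogram,\numberOfParameters-1]]^{\rm T}\in\setOfHistograms[\numberOfEdgeDevices][\numberOfParameters]$ is a vector of $\numberOfParameters$ nonnegative integers that sum to $\numberOfEdgeDevices$. Each entry is automatically at most $\numberOfEdgeDevices$, so the membership condition in the definition should be read as $\{0,1,\dots,\numberOfEdgeDevices\}$, i.e., $[\numberOfEdgeDevices+1]$ in the paper's notation. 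Counting $\setOfHistograms[\numberOfEdgeDevices][\numberOfParameters]$ is therefore counting the weak compositions of $\numberOfEdgeDevices$ into $\numberOfParameters$ parts.

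To count them, I will build an explicit bijection between $\setOfHistograms[\numberOfEdgeDevices][\numberOfParameters]$ and the binary strings of length $\numberOfEdgeDevices+\numberOfParameters-1$ that contain exactly $\numberOfParameters-1$ ones.

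\textbf{Forward map.} Send a histogram to the string
\[
0^{\numberOfEdgeDevicesForAGivenSymbol[\indexHistogram,0]}\,1\,0^{\numberOfEdgeDevicesForAGivenSymbol[\indexHistogram,1]}\,1\cdots1\,0^{\numberOfEdgeDevicesForAGivenSymbol[\indexHistogram,\numberOfParameters-1]}.
\]
Here the $\numberOfEdgeDevices$ zeros stand for the nodes, and the $\numberOfParameters-1$ ones are separators between consecutive symbols. The length is $\numberOfEdgeDevices+\numberOfParameters-1$ because the entries sum to $\numberOfEdgeDevices$.

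\textbf{Inverse map.} Given such a string, the $\numberOfParameters-1$ ones cut it into $\numberOfParameters$ (possibly empty) maximal runs of zeros. Reading off the run lengths in order gives a nonnegative integer vector summing to $\numberOfEdgeDevices$.

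The two maps are mutually inverse, because the positions of the separators determine the run lengths uniquely and conversely. Hence
\[
|\setOfHistograms[\numberOfEdgeDevices][\numberOfParameters]|=\binom{\numberOfEdgeDevices+\numberOfParameters-1}{\numberOfParameters-1},
\]
since a string is determined by choosing which $\numberOfParameters-1$ of its $\numberOfEdgeDevices+\numberOfParameters-1$ positions hold the ones.

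As a sanity check, or as an alternative proof, I can use induction on $\numberOfParameters$. For $\numberOfParameters=1$ there is a single histogram. For the step, condition on $\numberOfEdgeDevicesForAGivenSymbol[\indexHistogram,\numberOfParameters-1]=j$ and apply the hockey-stick identity:
\[
\sum_{j=0}^{\numberOfEdgeDevices}\binom{\numberOfEdgeDevices-j+\numberOfParameters-2}{\numberOfParameters-2}=\binom{\numberOfEdgeDevices+\numberOfParameters-1}{\numberOfParameters-1}.
\]

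There is no real obstacle. The only points needing care are the index-range reading of $\numberOfEdgeDevicesForAGivenSymbol[\indexHistogram,\indexParameters]$ noted above, and checking that empty runs (symbols used by no node) are handled correctly by the bijection.
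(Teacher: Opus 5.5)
Your proposal is correct and uses the same stars-and-bars argument as the paper; the paper only states the equivalence with placing $K$ indistinguishable balls into $Q$ distinguishable bins, whereas you write the bijection out explicitly. Your reading of the entry range as $\{0,1,\dots,K\}$ rather than the literal $[K]$ is also the right one, since the paper's own $K=Q=2$ example includes $[2,0]^{\rm T}$ and $[0,2]^{\rm T}$.
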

\begin{proof}
The number of distinct histograms for $K$ nodes and $Q$ symbols is equivalent to the number of possible ways to put $K$ indistinguishable  balls into $Q$ distinguishable  bins, which is $\maxMultiplicty$ by the stars and bars method. 
\end{proof}
Due to Proposition~\ref{prop:maxCardinality}, a symmetric function can return at most $\maxMultiplicty$ different values, implying that there exist  at most $\maxMultiplicty$ elements in the category space $\setOfCategories$, i.e., $\numberOfFcnOutputs\le\maxMultiplicty$ (see \cite{Saeed_2024channelComp} for a similar discussion). Note that  the equality holds when the function $\functionHistogram$ results in distinct values for all distinct histograms. However, for most of the symmetric functions, distinct histograms can result in the same value, i.e., they are mapped to the same element in $\setOfCategories$ (e.g., an extreme case is when the function gives the same result for all histograms, i.e., $\numberOfFcnOutputs=1$  (e.g., a constant function)). We formalize this statement by defining the following.

\begin{definition}[Categorical representative]
Let $\functionRepresentative$ be a function that maps the $\indexHistogram$th histogram $\histogram[\indexHistogram]$ to the category  $\sequenceCategoryEle[\indexHistogram]$ such that $\forall\sequenceCategoryEle[\indexHistogram+1]\in[0,\mydots,\max(\sequenceCategoryEle[0],\mydots,\sequenceCategoryEle[\indexHistogram])+1]$ and $\sequenceCategoryEle[0]=0$ (i.e., the sequence $(\sequenceCategoryEle[0],\mydots,\sequenceCategoryEle[\maxMultiplicty-1])$ starts with $0$ (i.e., category 0), and the subsequent element of the sequence can increase by 1 (i.e., a new category), or be any number appeared earlier (i.e., one of the earlier categories)). The function $\functionRepresentative$ is the categorical representative of the function $\functionSymbol$.
\end{definition}

The significance of the categorical representatives is that they capture all symmetric functions for given $\numberOfEdgeDevices$ and $\numberOfParameters$, in the sense that if two distinct target functions share the same categorical representative, they both can be evaluated using the same set of \ac{OAC} symbols.

\begin{proposition}
	The number of categorical representatives equals  the $\maxMultiplicty$th Bell number.
	\label{prop:numberOfFunction}
\end{proposition}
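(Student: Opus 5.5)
The plan is to identify categorical representatives with set partitions of the histogram set $\setOfHistograms[\numberOfEdgeDevices][\numberOfParameters]$. By Proposition~\ref{prop:maxCardinality} this set has exactly $\maxMultiplicty$ elements. The $\maxMultiplicty$th Bell number is, by definition, the number of partitions of an $\maxMultiplicty$-element set. The whole proof therefore reduces to exhibiting a bijection between categorical representatives and partitions of $\{\histogram[0],\mydots,\histogram[\maxMultiplicty-1]\}$, with the histograms listed in the fixed order used in the definition.

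\textbf{Representative to partition.} Given a categorical representative $\functionRepresentative$ with label sequence $(\sequenceCategoryEle[0],\mydots,\sequenceCategoryEle[\maxMultiplicty-1])$, map it to the partition whose blocks are the nonempty fibers $\{\histogram[\indexHistogram] : \sequenceCategoryEle[\indexHistogram]=p\}$. This is clearly a partition of the histogram set.

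\textbf{Partition to representative.} Given a partition, order its blocks by their smallest histogram index. Assign label $0$ to the block containing $\histogram[0]$, label $1$ to the next block encountered, and so on. The resulting sequence satisfies the defining condition: $\sequenceCategoryEle[0]=0$, and each new element either repeats an earlier label (its block has already been seen) or equals the current maximum plus one (its block is new). So the construction yields a valid categorical representative.

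\textbf{Inverse check.} This is the main step, and it amounts to the standard fact that restricted growth strings encode set partitions. Starting from a partition, building the labels, and taking fibers plainly returns the same partition. In the other direction, I must show that a sequence obeying $\sequenceCategoryEle[\indexHistogram+1]\le\max(\sequenceCategoryEle[0],\mydots,\sequenceCategoryEle[\indexHistogram])+1$ with $\sequenceCategoryEle[0]=0$ is exactly the canonical first-occurrence labeling of its own fibers. I would prove this by induction on $\indexHistogram$. The growth condition forces each new label to be the least unused integer, so labels appear in order of first occurrence. Hence two valid sequences with the same fibers coincide, which gives injectivity; the construction above gives surjectivity.

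Counting the partitions of an $\maxMultiplicty$-element set then gives the $\maxMultiplicty$th Bell number. One could add the remark that this is consistent with $\numberOfFcnOutputs\le\maxMultiplicty$, since the number of blocks is the number of categories.
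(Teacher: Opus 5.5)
Your proof is correct and follows the same route as the paper: identify categorical representatives with partitions of the $R$-element histogram set, then invoke the definition of the $R$th Bell number. The paper only asserts this identification, whereas you make it explicit through the restricted-growth-string bijection, whose key invariant is that the labels used among $l_0,\dots,l_r$ are exactly $\{0,\dots,\max_{j\le r} l_j\}$.
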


\begin{proof}
%The number of the categorical representatives is identical to the cardinality of  $\{(\sequenceCategoryEle[0],\mydots,\sequenceCategoryEle[\maxMultiplicty-1])|\forall\sequenceCategoryEle[\indexHistogram+1]\in[0,\mydots,\max(\sequenceCategoryEle[0],\mydots,\sequenceCategoryEle[\indexHistogram])+1], \sequenceCategoryEle[0]=0\}$. 
The number of  categorical representatives is equal to the total number of ways to partition a set of  $\maxMultiplicty$ histograms into non-empty, unordered subsets, equal to the $\maxMultiplicty$th Bell number by the definition \cite{Bell01081934}.
\end{proof}

For instance, there are $\maxMultiplicty=3$ distinct histograms, i.e., $\setOfHistograms[\numberOfEdgeDevices][\numberOfParameters]=\{\histogram[0]^{\rm T}=[2,0],\histogram[1]^{\rm T}=[1,1],\histogram[2]^{\rm T}=[0,2]\}$ for $\numberOfEdgeDevices=2$ and $\numberOfParameters=2$. Hence, there are $5$, i.e., the third Bell number, distinct categorical representatives with $(\sequenceCategoryEle[0],\sequenceCategoryEle[1],\sequenceCategoryEle[2])\in\{(0,1,2), (0,1,1), (0,1,0), (0,0,1), (0,0,0)\}$. While  $(0,1,2)$ means that each distinct histogram yields a different function output, $(0,0,0)$ indicates that the function returns the same value for all inputs. For example, for $\setOfParameters=\{\parameter[0]=0,\parameter[1]=-2\}$, the categorical representatives of the target functions $\functionArgument[1]+\functionArgument[2]$, $\min(\functionArgument[1],\functionArgument[2])$, $(\functionArgument[1]+\functionArgument[2])\mod4$, $\functionArgument[1]\functionArgument[2]$,  and ${(\functionArgument[1]+\functionArgument[2])\mod2}$ results in $(0,1,2)$, $(0,1,1)$, $(0,1,0)$, $(0,0,1)$, and $(0,0,0)$, respectively, for $(\sequenceCategoryEle[0],\sequenceCategoryEle[1],\sequenceCategoryEle[2])$.

Proposition~\ref{prop:numberOfFunction} shows that the number of categorical representatives  can be very large even for small values of $\numberOfEdgeDevices$ and $\numberOfParameters$. For example, for $\numberOfEdgeDevices=4$ and $\numberOfParameters=4$, $\maxMultiplicty$ is $35$, and there exist approximately $2.8160e+29$ categorical representatives! Given such a large function space, optimizing OAC symbols for each possible function is computationally intensive.

\subsection{Construction}
To construct a generic set of OAC symbols, a key observation  is that if $\setOfOACsymbols$ can be used to evaluate \textit{the categorical representative with the maximum number of distinct categories} (e.g., the one with $(\sequenceCategoryEle[0],\sequenceCategoryEle[1],\sequenceCategoryEle[2])=(0,1,2)$ for $\maxMultiplicty=3$), it can also be used to evaluate any other categorical representative. This is because  such $\setOfOACsymbols$ must lead to distinct superposed OAC symbols  in $\complexNumbers^\OACsymbolSize$ for distinct histograms, and the distinct points in $\complexNumbers^\OACsymbolSize$ can be mapped to the categories based on the target function. Based on this observation, we seek  $\OACsymbolsMatrix\triangleq[\OACsymbol[0],\mydots,\OACsymbol[\numberOfParameters-1]]\in\complexNumbers^{\OACsymbolSize\times\numberOfParameters}$  such that it maximizes the minimum pairwise Euclidean distance across all possible superposed OAC symbols, i.e., $\{\OACsymbolsMatrix\histogram[\indexHistogram],\forall\indexHistogram\in[\maxMultiplicty]\}$:
\begin{align}
	\OACsymbolsMatrixOpt = \arg\max_{\OACsymbolsMatrix} \minimumDistance[\OACsymbolsMatrix],~\text{s.t.}~ \frac{1}{\numberOfParameters}\sum_{\forall\indexParameters}\norm{\OACsymbol[\indexParameters]}_2^2=\OACsymbolSize~,
	\label{eq:originalProblem}
\end{align}
where
\begin{align}
	\minimumDistance[\OACsymbolsMatrix] \triangleq  \min_{\substack{\forall(\histogram[i],\histogram[j])\in\setOfHistograms[\numberOfEdgeDevices][\numberOfParameters]^2\\\histogram[i]\neq\histogram[j]}} \norm{\OACsymbolsMatrix(\histogram[i]-\histogram[j])}_2~.
	\label{eq:minimumDistance}
\end{align}
Then, the corresponding \ac{MaxLike} decoder can be expressed as
\begin{align}
	\decoder(\receivedSequence)=\functionHistogramArbitrary[\arg\min_{\histogram[]\in\setOfHistograms[\numberOfEdgeDevices][\numberOfParameters]}\norm*{\receivedSequence- \targetAlignment\OACsymbolsMatrixOpt\histogram[]}_2]~.
	\label{eq:MLdetector}
\end{align}
Notice that  \eqref{eq:originalProblem} does not provide a constructive solution. Also, it is not trivial to solve \eqref{eq:originalProblem} for large $\numberOfEdgeDevices$ and $\numberOfParameters$ for a given $\OACsymbolSize$. We address these issues by introducing a specific structure to $\OACsymbolsMatrix$, ensuring the identification of the histogram, as follows.

Let $\basis=(\basisElement[0],\mydots,\basisElement[\numberOfDigits-1])\in\complexNumbers^\numberOfDigits$ and  $\aSet\subseteq\integers^\numberOfDigits$. For $(\anElementOfSetEle[0],\mydots,\anElementOfSetEle[\numberOfDigits-1])\in\aSet$, let $\anInjectiveMap:\aSet\rightarrow\complexNumbers$ be a mapping given by
  \begin{align}
\anInjectiveMap(\anElementOfSetEle[0],\mydots,\anElementOfSetEle[\numberOfDigits-1])=\sum_{\indexDigit=0}^{\numberOfDigits-1}\basisElement[\indexDigit]\anElementOfSetEle[\indexDigit]. 
 \end{align}
 We say that $\basis$ gives \textit{unique representations over $\aSet$} if $\anInjectiveMap$ is injective. For example, if $\basis=\{1,\constanti,\sqrt{2}\}$ and  $\aSet=\integers^3$, $\anInjectiveMap(\anElementOfSetEle[0],\anElementOfSetEle[1],\anElementOfSetEle[2])=\anElementOfSetEle[0]+\anElementOfSetEle[1]\constanti+\anElementOfSetEle[2]\sqrt{2}$ is injective. Thus, every number of the form $\anElementOfSetEle[0]+\anElementOfSetEle[1]\constanti+\anElementOfSetEle[2]\sqrt{2}$ with $\anElementOfSetEle[0],\anElementOfSetEle[1],\anElementOfSetEle[2]\in\integers$ can be uniquely represented. Similarly, $\basis=\{1,1/\numberOfEdgeDevices,1/\numberOfEdgeDevices^2,1/\numberOfEdgeDevices^3\}$ gives unique representations over $[\numberOfEdgeDevices]^4$. 
 This example also illustrates that $\numberOfDigits$ is analogous  to the number of digits in a positional number system. While a large $\numberOfDigits$ increases the resolution, it causes points to get closer under an energy constraint, i.e., an increased sensitivity to noise.

 Now, assume that $\basis$ gives unique representations over $\setOfHistogramsSub\triangleq\{{(\numberOfEdgeDevicesForAGivenSymbol[0],\dots,\numberOfEdgeDevicesForAGivenSymbol[\numberOfDigits-1])}|\numberOfEdgeDevicesForAGivenSymbol[0]+\cdots+\numberOfEdgeDevicesForAGivenSymbol[\numberOfParameters-1]=\numberOfEdgeDevices,\forall\numberOfEdgeDevicesForAGivenSymbol[\indexParameters]\in[\numberOfEdgeDevices]\}$. By setting $\numberOfParameters=\OACsymbolSize\numberOfDigits$,  we propose to construct the OAC symbols as a function of $\basis$ as 
\begin{align}
	\OACsymbolsMatrixBasis[\basis] = \sqrt{\OACsymbolSize} \frac{\OACsymbolsMatrixUnnormalized-\OACsymbolMean}{\OACsymbolStd}~,
	\label{eq:OACsymbolsMatrix}
\end{align}
where 
\begin{align}
 \OACsymbolsMatrixUnnormalized &= \identityMatrix[\OACsymbolSize]\otimes\basis=
  \resizebox{2.1in}{!}{$
  \begin{bmatrix}
 	\begin{matrix}
 		\basisElement[0] & \mydots & \basisElement[\numberOfDigits-1] \\ 
 		0 				 & \mydots & 0\\
 		\vdots&\ddots&\vdots\\
 		0 				 & \mydots & 0
 	\end{matrix}
 	&
 		\mydots
 		&
 	\begin{matrix}
		0 				 & \mydots & 0 \\ 
		0 				 & \mydots & 0\\
		\vdots	&\ddots&\vdots\\
		\basisElement[0] & \mydots & \basisElement[\numberOfDigits-1]
	\end{matrix}
 \end{bmatrix}$~,}
\label{eq:OACsymbolsMatrixUnnormalized}
\end{align}
and 
\begin{align}
	\OACsymbolMean &\triangleq \frac{1}{\numberOfParameters}\sum_{\indexDigit=0}^{\numberOfDigits-1}\basisElement[\indexDigit]~,\label{eq:OACsymbolsMean}\\
	\OACsymbolStd^2 &\triangleq (\OACsymbolSize-1) |\OACsymbolMean|^2 + \frac{1}{\numberOfDigits} \sum_{\indexDigit=0}^{\numberOfDigits-1} |\basisElement[\indexDigit]-\OACsymbolMean|^2~.\label{eq:OACsymbolsStd}
\end{align}
In \eqref{eq:OACsymbolsMatrix}, we centralize  $\OACsymbolsMatrixUnnormalized$ and normalize the columns of  $\OACsymbolsMatrixUnnormalized-\OACsymbolMean$ such that $1/\numberOfParameters\times\sum_{\forall\indexParameters}\norm{\OACsymbol[\indexParameters]}_2^2=\OACsymbolSize$.  As explicitly shown in \eqref{eq:OACsymbolsMatrixUnnormalized}, $\OACsymbolsMatrixUnnormalized$ reuses $\basis$ over $\OACsymbolSize$-dimensional space  and leverages that $\basis$ gives unique representations over $\setOfHistogramsSub$, allowing the decoder to construct the complete histogram by identifying $(\numberOfEdgeDevicesForAGivenSymbol[\indexOACsymbolsElement\numberOfDigits],\dots,\numberOfEdgeDevicesForAGivenSymbol[\indexOACsymbolsElement\numberOfDigits+\numberOfDigits-1])$  from the $\indexOACsymbolsElement$th element of the superposed OAC symbol,  $\forall\indexOACsymbolsElement\in[\OACsymbolSize]$.\footnote{$\OACsymbolsMatrixBasis[\basis]$ can be multiplied by an arbitrary complex number on the unit circle in some implementations without affecting its \ac{CER} performance.}

\subsubsection{Case $\numberOfDigits=1$}
For $\numberOfDigits=1$, the columns of $\OACsymbolsMatrixUnnormalized$ are orthogonal to each other, and we set $\basis=(1)$ without loss of generality. In this case, the OAC symbols after the normalization are the elements of an $\OACsymbolSize$-dimensional simplex signaling scheme \cite{proakis2008digital}, where they are equally spaced and form an equiangular set on $(N-1)$-dimensional plane in $\OACsymbolSize$-dimensional  space. 
For this case, $\OACsymbolMean=1/\OACsymbolSize$ and $\OACsymbolStd^2=(\OACsymbolSize-1)/\OACsymbolSize$, and $\minimumDistance[{\OACsymbolsMatrixBasis[\basis]}]$ for \textit{any} $\numberOfEdgeDevices$ can be calculated as
${\sqrt{2}\OACsymbolSize}/{\sqrt{\OACsymbolSize-1}}$.
For example, the OAC symbols are the columns of $\OACsymbolsMatrix$ given by
\begin{align}
\OACsymbolsMatrix =  \begin{bmatrix}
	+1 & -1\\
	-1 & +1
\end{bmatrix}~,
~\OACsymbolsMatrix =  \sqrt{3}\begin{bmatrix}
	+1 & -\frac{1}{3} & -\frac{1}{3} & -\frac{1}{3}\\
	-\frac{1}{3} & 	+1 & -\frac{1}{3} & -\frac{1}{3}\\
	-\frac{1}{3} & -\frac{1}{3} & 	+1 & -\frac{1}{3}\\
	-\frac{1}{3} & -\frac{1}{3} &  -\frac{1}{3} & 	+1\\
\end{bmatrix}~, \nonumber
\end{align}
for $\OACsymbolSize=2$ and $\OACsymbolSize=4$, and supports $\numberOfParameters=2$ and $\numberOfParameters=4$ parameters, respectively. 

The case $\numberOfDigits=1$ is relevant to \ac{TBMA} \cite{Mergen_2006tsp}. In \ac{TBMA}, each sensor observes a single random variable  and transmits according to the \textit{type} of their observations (e.g., one of the $\numberOfParameters$ quantization levels) over orthogonal resources, and the fusion node estimates the parameter. In contrast, the proposed construction for $\numberOfDigits=1$ leverages histograms for a \textit{digital} function computation, and the decoder detects the histogram via simplex signaling to obtain the category.

\subsubsection{Case $\numberOfDigits=2$}
For $\numberOfDigits=2$, we set $\basis=(1,\constanti)$, which extends the case $\numberOfDigits=1$ from reals to the complex numbers while maximizing $\minimumDistance[{\OACsymbolsMatrixBasis[\basis]}]$. For this case, $\OACsymbolMean=(1+\constanti)/\OACsymbolSize$, $\OACsymbolStd^2=(2\OACsymbolSize-1)/(2\OACsymbolSize)$, and 
$
	\minimumDistance[{\OACsymbolsMatrixBasis[\basis]}]={2\OACsymbolSize}/{\sqrt{2\OACsymbolSize-1}}
$.
For example, for $\OACsymbolSize=1$ and $\OACsymbolSize=2$ (i.e., $\numberOfParameters=2$ and $\numberOfParameters=4$, respectively), the OAC symbols can be calculated as
\begin{align}
	\OACsymbolsMatrix &= 
	 \frac{1}{\sqrt{2}}\begin{bmatrix}
		1-\constanti & 	-1+\constanti\\
	\end{bmatrix}~,
	~\nonumber\\
	\OACsymbolsMatrix &= \frac{\sqrt{3}}{\sqrt{2}}\begin{bmatrix}
		+1-\frac{1\constanti}{3} 				& -\frac{1}{3}+1\constanti 			 & -\frac{1}{3}-\frac{1\constanti}{3} 		& -\frac{1}{3}-\frac{1\constanti}{3}\\
		-\frac{1}{3}-\frac{1\constanti}{3} 		& -\frac{1}{3}-\frac{1\constanti}{3} & +1-\frac{1\constanti}{3} 				& -\frac{1}{3}+1\constanti\\
	\end{bmatrix}~,
 \nonumber
\end{align}
respectively.
This case keeps the properties of the case $\numberOfDigits=1$, however, it halves the resources consumed.

\subsubsection{Case $\numberOfDigits\ge3$}
For this case,  one may use a set of linearly independent numbers over the rationals  to obtain $\basis$ by exploiting irrationals,  transcendental numbers, and primes (e.g., $\{1,\sqrt{p}|\text{$p$ is prime}\}$, $\{\pi^n|n\in\integers\}$, and $\{\constante^{2\pi\constanti\sqrt{p}}|\text{$p$ is prime}\}$) or powers of a radix (e.g., $\{1,1/(\numberOfEdgeDevices+1),1/(\numberOfEdgeDevices+1)^2,1/(\numberOfEdgeDevices+1)^3,\mydots\}$) for a given $\numberOfEdgeDevices$ nodes (see also the discussions in \cite[Ch. 12.4.1]{Zamir_Nazer_Kochman_Bistritz_2014} and the separability condition \cite{Motahari_2014}). Although these strategies have their own merits, i.e., constructive and theoretically guarantee that $\minimumDistance[{\OACsymbolsMatrixBasis[\basis]}]$ is larger than zero, they are often sub-optimal. To improve the performance, we consider the following optimization problem:
\begin{align}
	\basisOpt = \arg\max_{\basis} \minimumDistance[{\OACsymbolsMatrixBasis[\basis]}]~,
	\label{eq:originalProblem2}
\end{align}
and attack the optimization problem in \eqref{eq:originalProblem2} by using a gradient-based approach given by
\begin{align}
\basis^{(\indexIteration+1)} = \basis^{(\indexIteration)} +\updateRate \dfrac{\partial}{\partial\basis}\minimumDistance[{\OACsymbolsMatrixBasis[\basis]}]\Bigg|_{\basis=\basis^{(\indexIteration)}}~,
\label{eq:gaOpt}
\end{align}
where $\updateRate$ is the step size. It is worth noting that  \eqref{eq:gaOpt} inherently takes the centralization and energy normalization into account via \eqref{eq:OACsymbolsMean} and \eqref{eq:OACsymbolsStd} for a given $\basis$.

\begin{table}
	\centering
	\caption{Several $\basis$ and $\minimumDistance[{\OACsymbolsMatrixBasis[\basis]}]$ for given  $\numberOfEdgeDevices$, $\OACsymbolSize$, and $\numberOfDigits$ .}
\begin{tabular}{|@{}c@{}|@{}c@{}|@{}c@{}|@{}c@{}|}
	\hline$\numberOfEdgeDevices$ & $(\OACsymbolSize,\numberOfDigits)$ & $ \Re\{\basis^{\rm T}\},\Im\{\basis^{\rm T}\}$ & $d_{\rm min}$ \\ 
	\hline$2$ & $(1,2)$ & {[1,0]}, {[0,1]} & 2 \\
	\hline
	$2$ & $(1,4)$ & {[1,-0.3724,-0.6022,0.0085]}, {[0,-1.0581,0.6595,-0.1328]} & 1.15 \\
	\hline
	$2$ & $(1,8)$ &\begin{tabular}{c} {[1,-0.279,-0.3458,-0.9852,-0.2029,0.5705,-0.6993,0.7903]},\\ {[0,-1.2224,0.1261,-0.4851,0.6158,-1.469,0.4943,0.8584]} \end{tabular}& 0.49 \\
	\hline
	$2$ & $(2,1)$ & {[1]}, {[0]} & 2.83 \\
	\hline
	$2$ & $(2,2)$ & {[1,0]}, {[0,1]} & 2.31 \\
	\hline
	$2$ & $(2,4)$ & {[1,-0.718,0.1916,0.4636]}, {[0,-0.6552,1.1367,-0.8545]} & 1.41 \\
	\hline
	$2$ & $(4,1)$ & {[1]}, {[0]} & 3.27 \\
	\hline
	$2$ & $(4,2)$ & {[1,0]}, {[0,1]} & 3.02 \\
	\hline
	$2$ & $(8,1)$ & {[1]}, {[0]} & 4.28 \\
	\hline
	$4$ & $(1,2)$ & {[1,0]}, {[0,1]} & 2 \\
	\hline
	$4$ & $(1,4)$ & {[1,-0.0365,0.2503,-0.6319]}, {[0,-0.2599,0.793,-0.0824]} & 0.87 \\
	\hline
	$4$ & $(1,8)$ &\begin{tabular}{c} {[1,0.3035,-0.3167,-0.3981,0.2938,-0.0465,-0.4866,-0.0451]},\\ {[0,0.4964,-0.7112,0.166,-0.9521,0.3839,0.0081,0.5907]} \end{tabular}& 0.14 \\
	\hline
	$4$ & $(2,1)$ & {[1]}, {[0]} & 2.83 \\
	\hline
	$4$ & $(2,2)$ & {[1,0]}, {[0,1]} & 2.31 \\
	\hline
	$4$ & $(2,4)$ & {[1,0.7329,-0.5872,-0.2982]}, {[0,-0.7216,0.2572,-1.3622]} & 0.98 \\
	\hline
	$4$ & $(4,1)$ & {[1]}, {[0]} & 3.27 \\
	\hline
	$4$ & $(4,2)$ & {[1,0]}, {[0,1]} & 3.02 \\
	\hline
	$4$ & $(8,1)$ & {[1]}, {[0]} & 4.28 \\
	\hline
\end{tabular}
	\label{table:optBasis}
\end{table}

In our implementation of \eqref{eq:gaOpt}, we use automatic  differentiation to calculate the gradient by using \eqref{eq:minimumDistance} and \eqref{eq:OACsymbolsMatrix}.  We use 2000 iterations and decrease $\updateRate$ from $0.1$ to $0.001$ linearly over the iterations. We run the optimizer with $1000$ randomly initialized $\basis^{(0)}$, sampled from the standard Gaussian distribution. We choose the best local optimum over these solutions.

In \tablename~\ref{table:optBasis}, we provide several $\basis$ and $\minimumDistance[{\OACsymbolsMatrixBasis[\basis]}]$  for  $\numberOfEdgeDevices\in\{2,4\}$ and various $(\OACsymbolSize,\numberOfDigits)$ pairs. The corresponding multi-dimensional constellations are also illustrated in \figurename~\ref{fig:constellationK2} and \figurename~\ref{fig:constellationK4}. Notice that the multi-dimensional OAC constellations are structured for $\numberOfDigits=1$ and $\numberOfDigits=2$. Similarly, the optimization  also yields structured constellations for some cases, e.g., $(\OACsymbolSize,\numberOfDigits)=(1,4)$ for $\numberOfEdgeDevices=2$. A key takeaway from \tablename~\ref{table:optBasis} is the trade-off among reliability, resource consumption, and the number of nodes. While increasing $\numberOfDigits$ reduces the number of resources consumed, it can decrease $\minimumDistance[{\OACsymbolsMatrixBasis[\basis]}]$ considerably for large $\numberOfDigits$ and $\numberOfEdgeDevices$ (e.g., see the superposed symbols in \figurename~\ref{fig:constellationK4}\subref{subfig:K2constellation_n1d8}). On the other hand, for $\numberOfDigits\in\{1,2\}$, $\minimumDistance[{\OACsymbolsMatrixBasis[\basis]}]$ is not a function of $\numberOfEdgeDevices$. This is essentially because these cases do not over-pack the $N$-dimensional space.
Thus,  $\numberOfDigits\in\{1,2\}$ is a more viable choice for a practical scenario  with a large $\numberOfEdgeDevices$, as demonstrated in Section~\ref{sec:numerical}. Nonetheless, the optimal $\basis$ for a large $\numberOfDigits$ still remains a theoretically interesting  question.
\label{subsec:largeD}
\begin{figure}[t]
	\centering
	\subfloat[$(\OACsymbolSize,\numberOfDigits)=(1,2)$.]{\includegraphics[width = \figuresizeSSS]{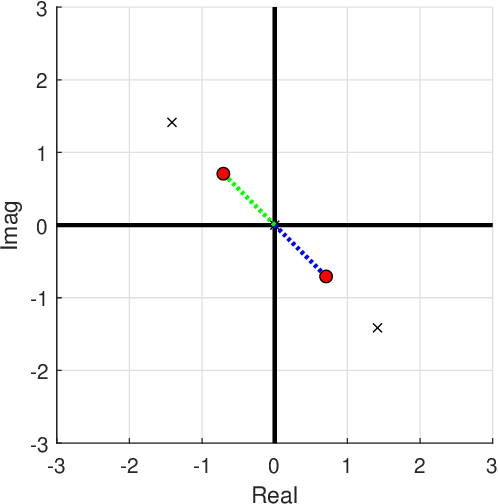}\label{subfig:K2constellation_n1d2}}
	\subfloat[$(\OACsymbolSize,\numberOfDigits)=(1,4)$.]{\includegraphics[width = \figuresizeSSS]{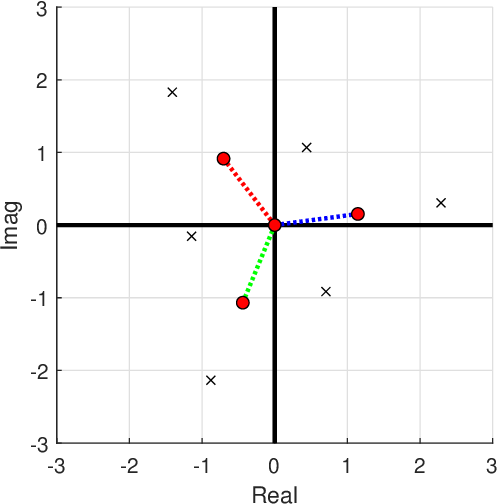}\label{subfig:K2constellation_n1d4}}		
	\subfloat[$(\OACsymbolSize,\numberOfDigits)=(1,8)$.]{\includegraphics[width = \figuresizeSSS]{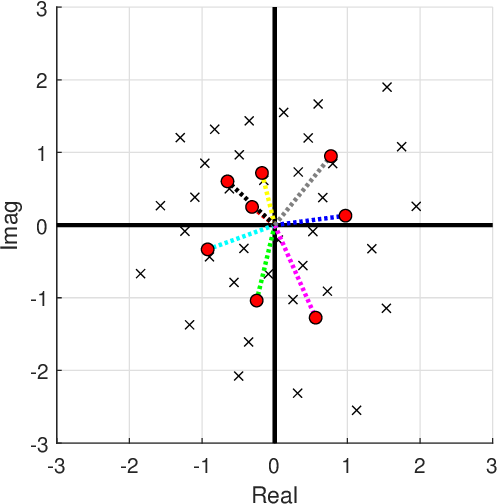}\label{subfig:K2constellation_n1d8}}\\
	\subfloat[$(\OACsymbolSize,\numberOfDigits)=(2,1)$.]{\includegraphics[width = \figuresizeSSS]{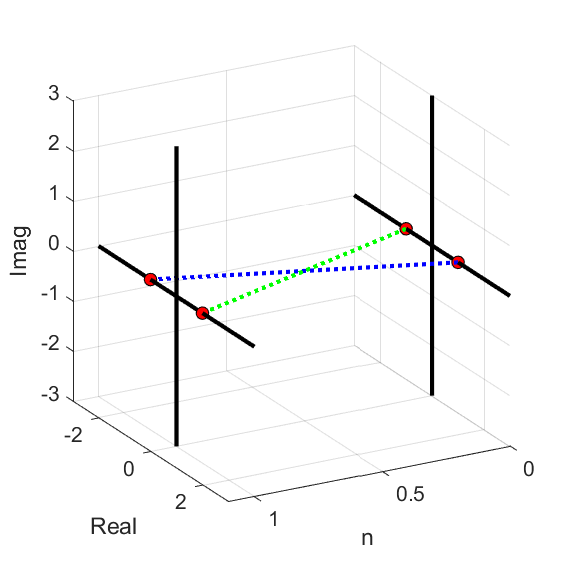}\label{subfig:K2constellation_n2d1}}
	\subfloat[$(\OACsymbolSize,\numberOfDigits)=(2,2)$.]{\includegraphics[width = \figuresizeSSS]{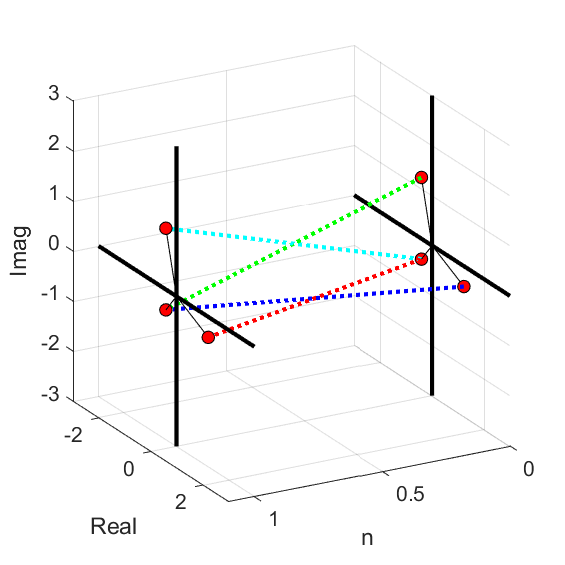}\label{subfig:K2constellation_n2d2}}	
	\subfloat[$(\OACsymbolSize,\numberOfDigits)=(2,4)$.]{\includegraphics[width = \figuresizeSSS]{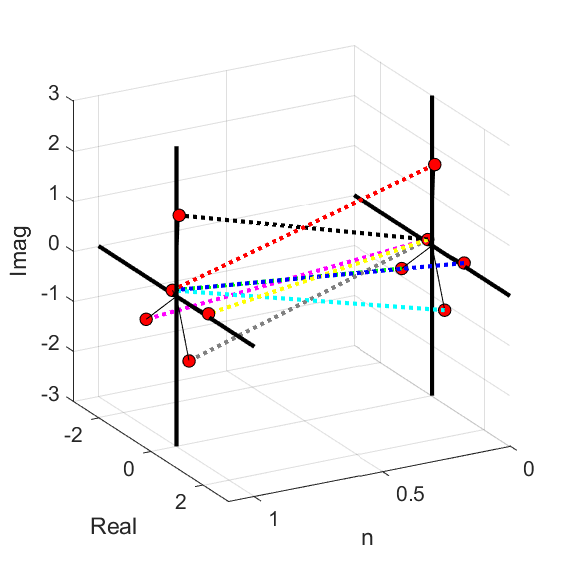}\label{subfig:K2constellation_n2d4}}\\
	\subfloat[$(\OACsymbolSize,\numberOfDigits)=(4,1)$.]{\includegraphics[width = \figuresizeSSS]{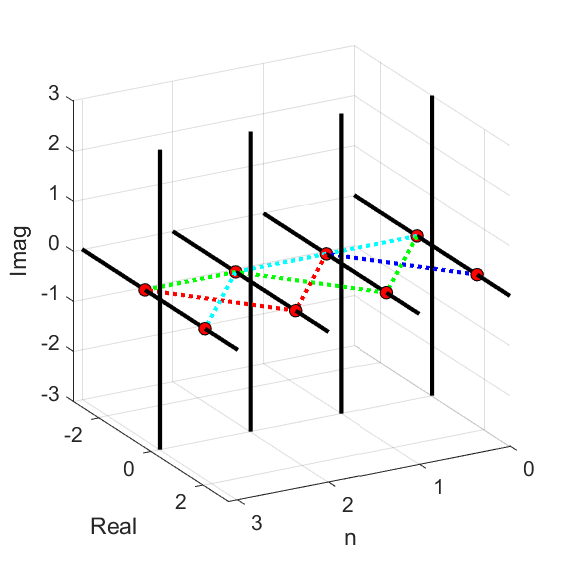}\label{subfig:K2constellation_n4d1}}	
	\subfloat[$(\OACsymbolSize,\numberOfDigits)=(4,2)$.]{\includegraphics[width = \figuresizeSSS]{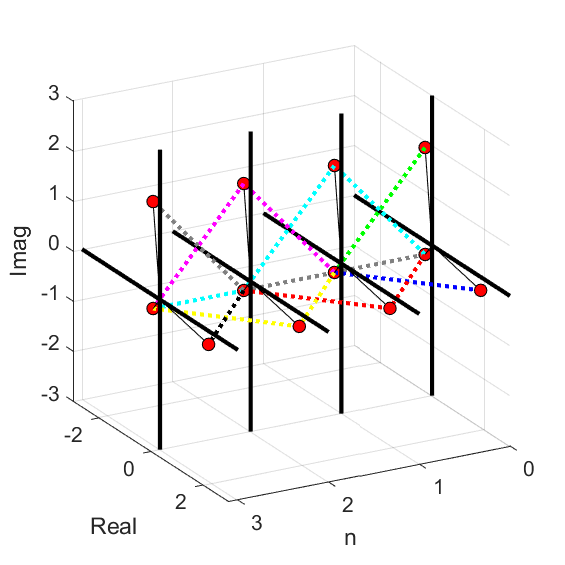}\label{subfig:K2constellation_n4d2}}		
	\subfloat[$(\OACsymbolSize,\numberOfDigits)=(8,1)$.]{\includegraphics[width = \figuresizeSSS]{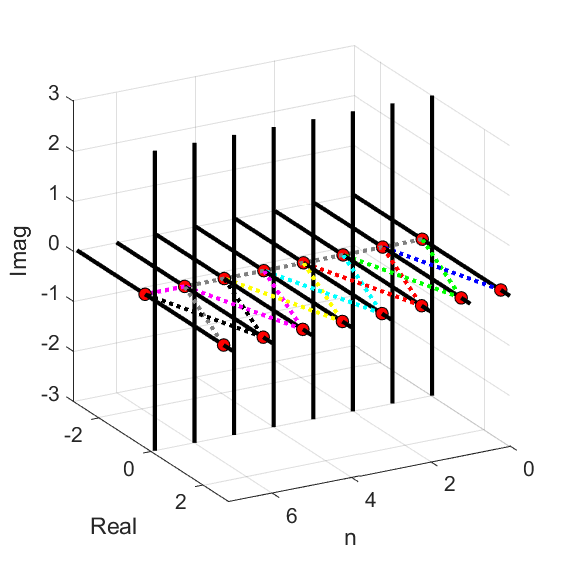}\label{subfig:K2constellation_n8d1}}		
	\caption{Multi-dimensional OAC constellations for computing an arbitrary function  for $\numberOfEdgeDevices=2$. The marker $\times$ shows the superposed symbols.}
	\label{fig:constellationK2}
\end{figure}
\begin{figure}[t]	
	\centering
	\subfloat[$(\OACsymbolSize,\numberOfDigits)=(1,2)$.]{\includegraphics[width = \figuresizeSSS]{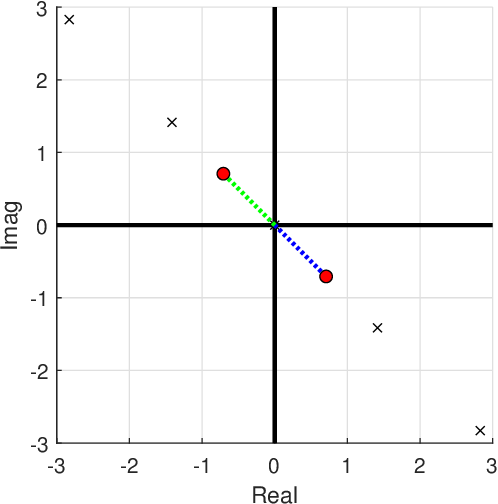}\label{subfig:K4constellation_n1d2}}
	\subfloat[$(\OACsymbolSize,\numberOfDigits)=(1,4)$.]{\includegraphics[width = \figuresizeSSS]{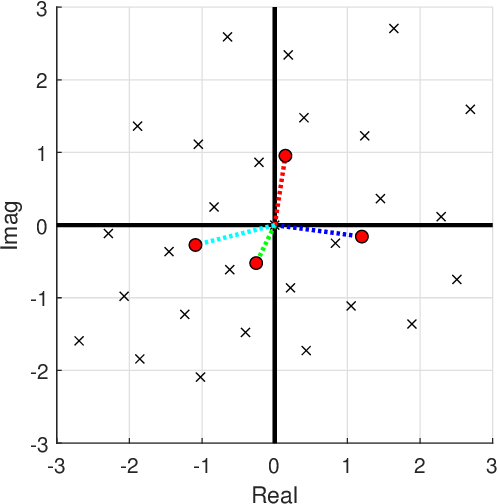}\label{subfig:K4constellation_n1d4}}		
	\subfloat[$(\OACsymbolSize,\numberOfDigits)=(1,8)$.]{\includegraphics[width = \figuresizeSSS]{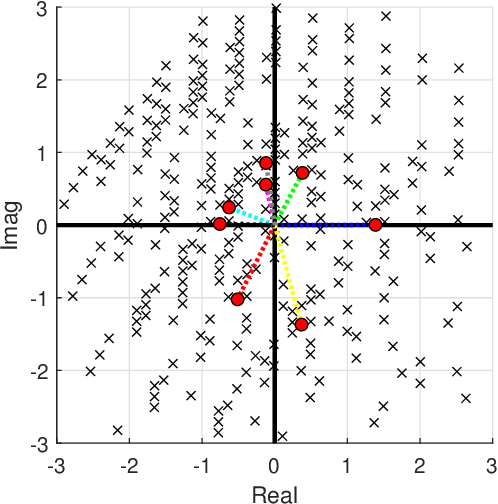}\label{subfig:K4constellation_n1d8}}\\
	\subfloat[$(\OACsymbolSize,\numberOfDigits)=(2,1)$.]{\includegraphics[width = \figuresizeSSS]{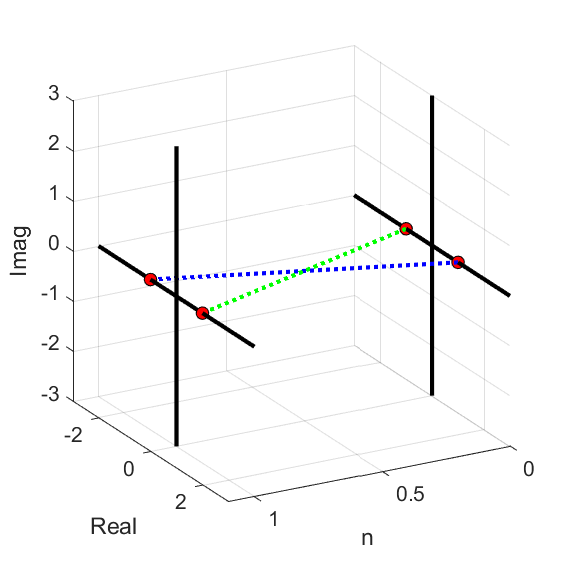}\label{subfig:K4constellation_n2d1}}
	\subfloat[$(\OACsymbolSize,\numberOfDigits)=(2,2)$.]{\includegraphics[width = \figuresizeSSS]{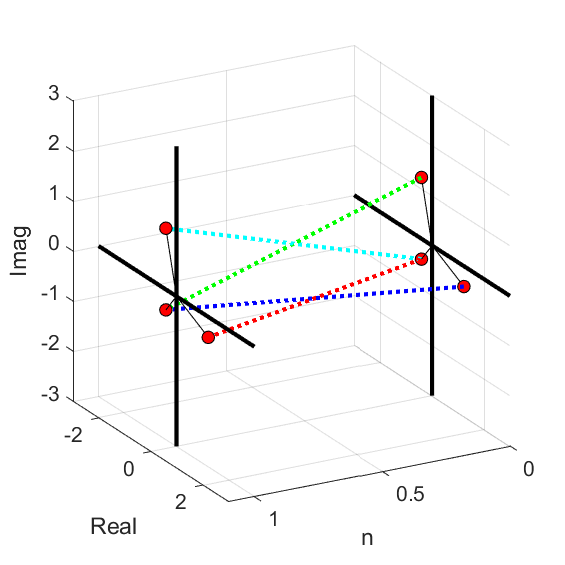}\label{subfig:K4constellation_n2d2}}	
	\subfloat[$(\OACsymbolSize,\numberOfDigits)=(2,4)$.]{\includegraphics[width = \figuresizeSSS]{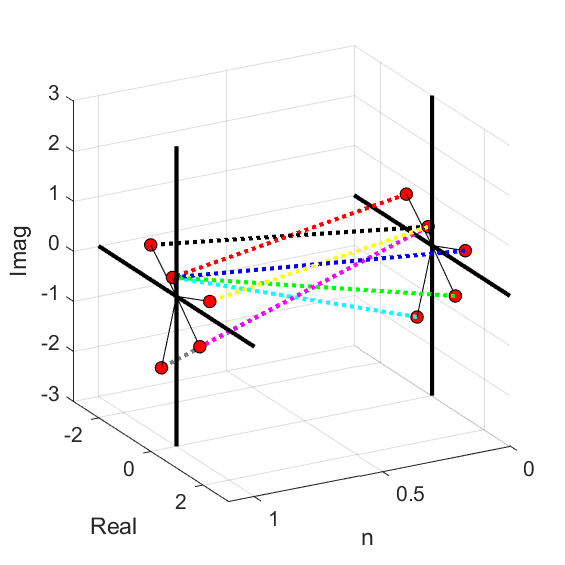}\label{subfig:K4constellation_n2d4}}\\
	\subfloat[$(\OACsymbolSize,\numberOfDigits)=(4,1)$.]{\includegraphics[width = \figuresizeSSS]{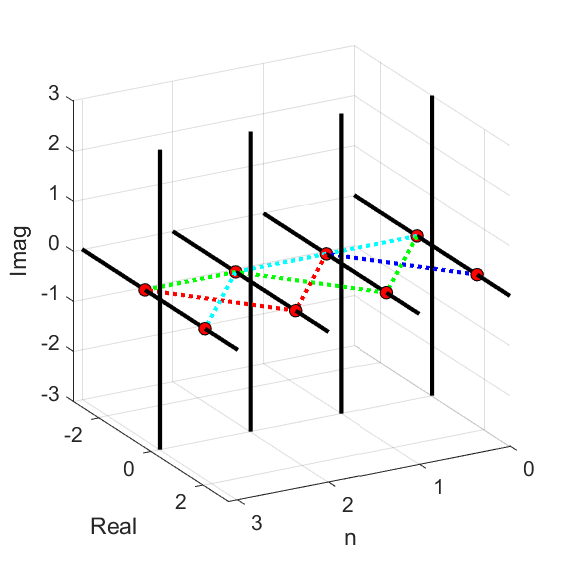}\label{subfig:K4constellation_n4d1}}	
	\subfloat[$(\OACsymbolSize,\numberOfDigits)=(4,2)$.]{\includegraphics[width = \figuresizeSSS]{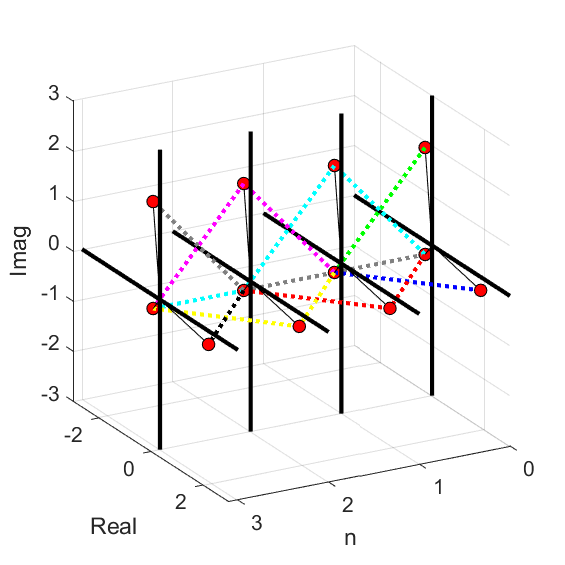}\label{subfig:K4constellation_n4d2}}
	\subfloat[$(\OACsymbolSize,\numberOfDigits)=(8,1)$.]{\includegraphics[width = \figuresizeSSS]{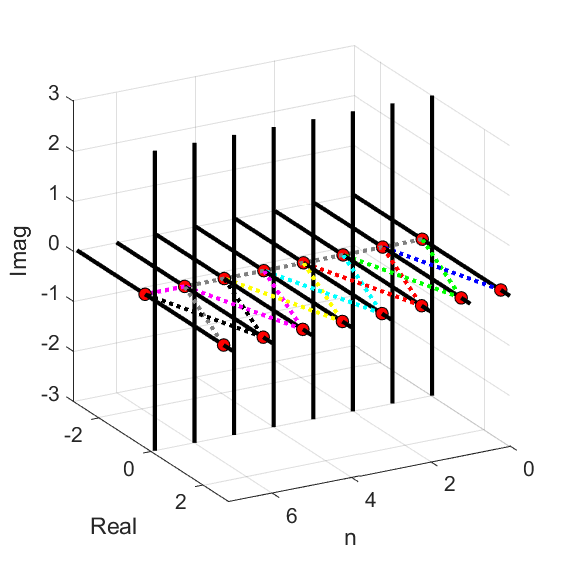}\label{subfig:K4constellation_n8d1}}			
	\caption{Multi-dimensional OAC constellations for computing an arbitrary function  for $\numberOfEdgeDevices=4$. The marker $\times$ shows the superposed symbols.}
	\label{fig:constellationK4}
\end{figure}

\section{Computation Error Rate}
\label{sec:CER}
%Let $\superposedOACsymbol[i]$ be the corresponding superposed OAC symbols at the receiver as $\superposedOACsymbol[i]\triangleq\OACsymbolsMatrix\histogram[i]$. 
Assume that the parameters at the nodes are equally likely and independent. The distribution of $\histogram[]=[\numberOfEdgeDevicesForAGivenSymbol[0],\dots,\numberOfEdgeDevicesForAGivenSymbol[\numberOfParameters-1]]^{\rm T}$ then follows a multinomial distribution with equal parameter probability $1/\numberOfParameters$, i.e.,
\begin{align}
	\probability[{\histogram[]}]=
	%& \frac{\binom{\numberOfEdgeDevices}{\numberOfEdgeDevicesForAGivenSymbol[0]}\binom{\numberOfEdgeDevices-\numberOfEdgeDevicesForAGivenSymbol[0]}{\numberOfEdgeDevicesForAGivenSymbol[1]}\dots\binom{\numberOfEdgeDevices-\sum_{\indexParameters=0}^{\numberOfParameters-2}\numberOfEdgeDevicesForAGivenSymbol[\indexParameters]}{\numberOfEdgeDevicesForAGivenSymbol[\numberOfParameters-1]}}{\numberOfParameters^\numberOfEdgeDevices}=
	\frac{\numberOfEdgeDevices!}{\numberOfEdgeDevicesForAGivenSymbol[0]!\numberOfEdgeDevicesForAGivenSymbol[1]!\dots\numberOfEdgeDevicesForAGivenSymbol[\numberOfParameters-1]!\numberOfParameters^\numberOfEdgeDevices} ~.
	\label{eq:multinomial}
\end{align}
Let  $\voronoiRegion[{\histogram[\indexHistogram]}]$ be the corresponding Voronoi region for the superposition ${\OACsymbolsMatrix\histogram[\indexHistogram]}$. Hence,  the \ac{CER} can be expressed as
\begin{align}
 \CER=\probability[\categoryDetected\neq\category]&=\sum_{\indexHistogram=0}^{\maxMultiplicty-1}\probability[{\text{error}|\histogram[\indexHistogram]}]\probability[{\histogram[\indexHistogram]}]~,
 \label{eq:cerBase}
\end{align}
where
\begin{align}
	\probability[{\text{error}|{\histogram[\indexHistogram]}}]&= 1 - \int_{\voronoiRegion[{\histogram[\indexHistogram]}]} \frac{1}{(\pi\noiseVariance)^{\OACsymbolSize/2}}\exp\left(-\frac{\norm{\textbf{x}-\OACsymbolsMatrix\histogram[i]}^2}{\noiseVariance}\right)d\textbf{x}~, \nonumber
\end{align}
and an error is an event that the detected histogram is different from $\histogram[\indexHistogram]$. We  obtain an upper bound on $\probability[{\text{error}|\histogram[\indexHistogram]}]$ as
\begin{align}
	\probability[{\text{error}|\histogram[\indexHistogram]}] &= \probability[{\bigcup_{\substack{i=0\\i\neq \indexHistogram}}^{\maxMultiplicty-1}{\errorEvent[i]\Bigg|\histogram[\indexHistogram]}}] \le
	\sum_{\substack{i=0\\i\neq \indexHistogram}}^{\maxMultiplicty-1}
	\probability[{\errorEvent[i]|\histogram[\indexHistogram]}] \nonumber\\&= \sum_{\substack{i=0\\i\neq \indexHistogram}}^{\maxMultiplicty-1}\qfunction[\frac{\norm{\OACsymbolsMatrix(\histogram[i]-\histogram[\indexHistogram])}_2}{\sqrt{2\noiseVariance}}]~, \label{eq:condBound}
\end{align}
where $\errorEvent[i]$ is the event that the detected histogram  is $\histogram[i]$. Finally, by using \eqref{eq:multinomial} and  \eqref{eq:condBound} in \eqref{eq:cerBase}, the union bound for the \ac{CER}  can be obtained as
\begin{align}
	 \CER\le \sum_{\indexHistogram=0}^{\maxMultiplicty-1}\sum_{\substack{i=0\\i\neq \indexHistogram}}^{\maxMultiplicty-1}\qfunction[\frac{\norm{\OACsymbolsMatrix(\histogram[i]-\histogram[\indexHistogram])}_2}{\sqrt{2\noiseVariance}}]	\frac{\numberOfEdgeDevices!}{\numberOfEdgeDevicesForAGivenSymbol[\indexHistogram,0]!\dots\numberOfEdgeDevicesForAGivenSymbol[\indexHistogram,\numberOfParameters-1]!\numberOfParameters^\numberOfEdgeDevices} ~.
	 \label{eq:unionBoundCER}
\end{align}

\section{Closing The Gap Between Theory and Practice}
\label{sec:closingTheGap}
Although the superposition model in \eqref{eq:superposition} is useful for assessing the performance of \ac{OAC} schemes relative to one another, it does not consider imperfections. Thus, the \ac{CER} in practice can significantly differ from that based on \eqref{eq:superposition}.  To capture the imperfections, we re-express  \eqref{eq:superposition} as
\begin{align}
	\receivedSequence = \sum_{\indexED=1}^{\numberOfEdgeDevices} \compositeModel[\indexED]\OACsymbol[{\symbol[\indexED]}]+\noiseVector~,
	\label{eq:superpositionPractice}
\end{align}
where $\compositeModel[\indexED]\in\complexNumbers$ is the composite channel response including the precoder $\precoder[\indexED]$, channel $\channel[\indexED]$, and imperfections for the $\indexED$th node. Clearly, if $\compositeModel[\indexED]$ is a constant, e.g., $\compositeModel[\indexED]=\targetAlignment$, $\forall\indexED$, \eqref{eq:superpositionPractice} reduces to \eqref{eq:superposition}. However, due to the combined effect of the imperfections, $\compositeModel[\indexED]$ scatters in the complex plane and deviates from the target amplitude $\targetAlignment$ and the phase of zero radians. 

Let $\PDFimpairments[\compositeSymbol]$ denote the \ac{PDF} of $\compositeModel[\indexED]$, $\forall\indexED$. In practice, $\PDFimpairments[\compositeSymbol]$ is a function of the communication protocol, mobility, hardware imperfections, channel estimation errors, and synchronization errors. Hence, it is not trivial to characterize it analytically. In this section, we address this challenge by introducing an  impairment model by acquiring $\compositeModel[\indexED]$ through realistic measurements \textit{without relying on any auxiliary synchronization (e.g., \ac{GPS} or cable-based synchronization)}. We also discuss a decoding strategy to mitigate the impact of imperfections on coherent superposition.

\begin{figure}[t]
	\centering
	\includegraphics[width =3.5in]{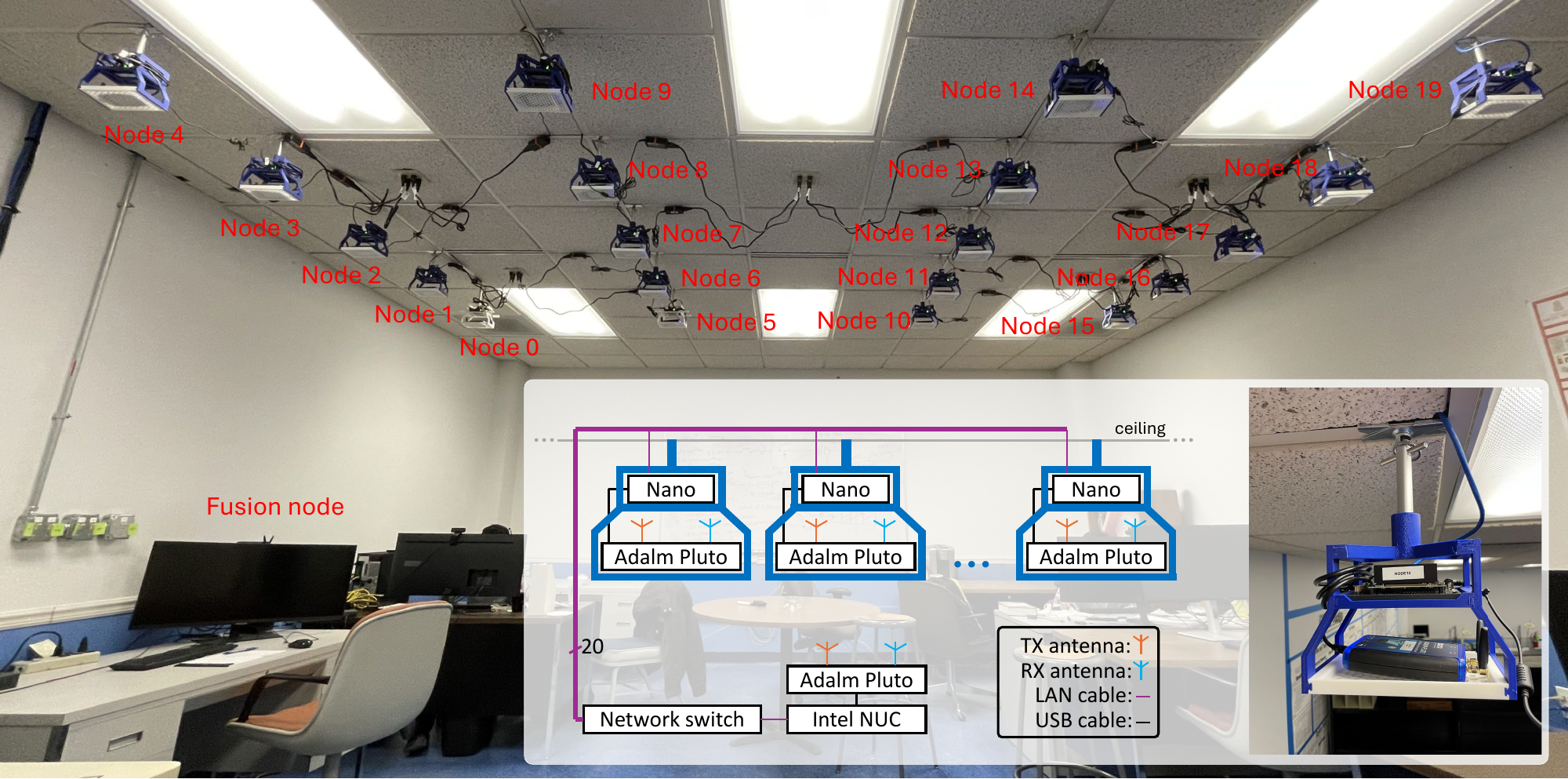}
	\caption{Synchronization testbed at the USC and its structure.}
	\label{fig:testbed}
\end{figure}
\begin{figure*}[t]
	\centering	
	\includegraphics[width =6.0in]{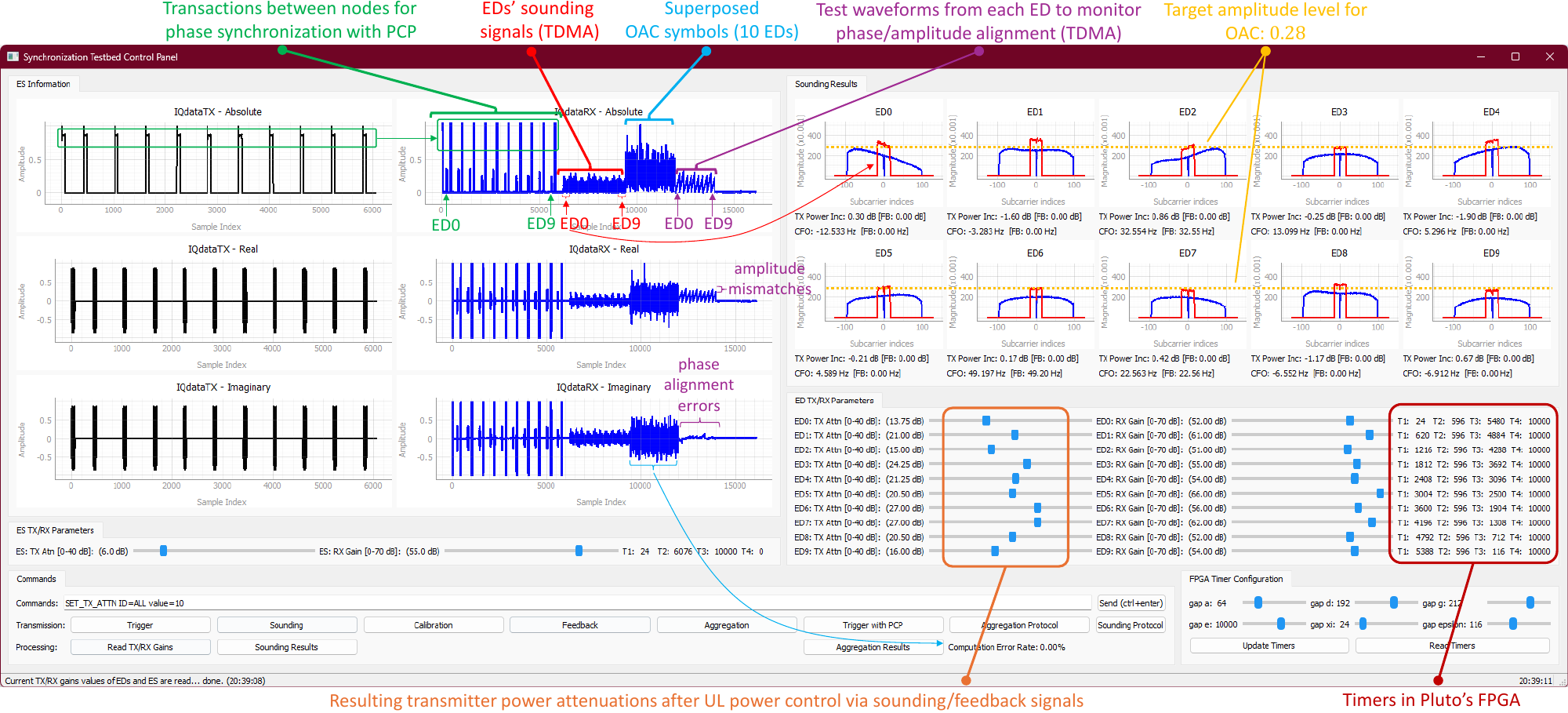}
	\caption{User interface for measuring the impact of impairments on coherent OAC in practice for 10 nodes.}
	\label{fig:userInterface}	
\end{figure*}
\begin{figure*}[t]
	\includegraphics[width =7in]{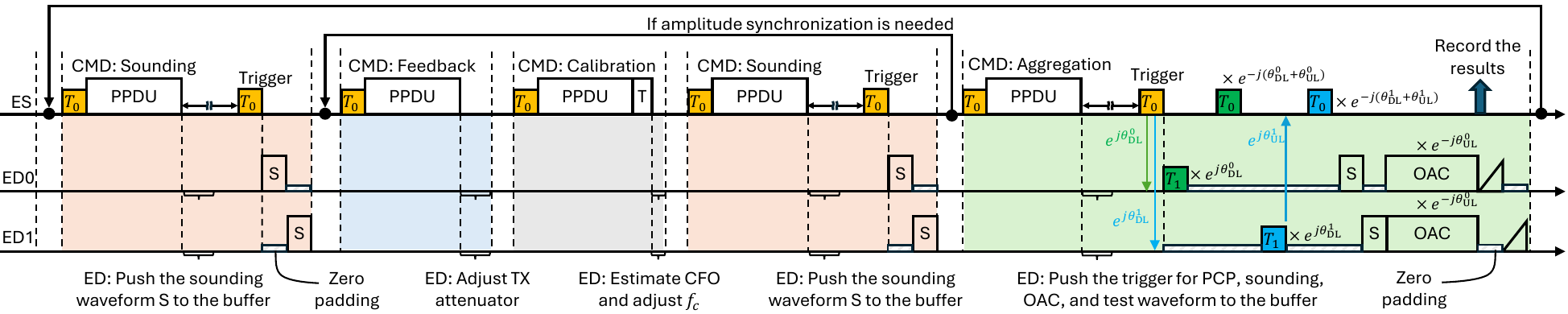}
	\caption{The aggregation protocol for two nodes to extract impairment distribution on coherent aggregation (S: Sounding waveform, T: Trailer waveform,  $\triggerWaveform[1]$, $\triggerWaveform[2]$: Trigger waveforms, $\theta_{\rm DL}^{0,1}$, $\theta_{\rm UL}^{0,1}$: The DL and UL phase mismatches for ED0 and ED1).}
	\label{fig:protocol}	
\end{figure*}
\subsection{Platform and modifications for synchronization}
For our measurements,  we develop a comprehensive platform hosting twenty Adalm Pluto (Rev. C) \acp{SDR} mounted on the ceiling, each representing a node (or an \ac{ED}), and one \ac{SDR} as the fusion node (or \ac{ES}), as shown in \figurename~\ref{fig:testbed}. On the platform, each SDR has a dedicated host computer: NVIDIA Jetson Nano computers for the \acp{ED} and an Intel NUC for the \ac{ES}. We also use an unmanaged network switch to enable communication between the host computers and to upload Python code to the EDs from the ES side. We use a specific user interface, shown in \figurename~\ref{fig:userInterface}, running on the ES side. Through the user interface, we dynamically interact with the ES to initiate the protocol for coherent aggregation and its subroutines, observe the results, monitor the SDR parameters, and acquire data regarding $\compositeModel[\indexED]$. The user interface also allows us to read the parameters of the EDs' SDRs over a \ac{LAN}. However, we do not manually set the parameters for measurement plausibility. 

The time and phase synchronization capability along with the ability to transmit and/or receive IQ data flexibly  with low-cost SDRs is the unique aspect of the platform. Time synchronization fundamentally relies on a trigger mechanism and extends the strategies used in our previous experiments \cite{sahinGC_2022} and \cite{sahin_infocom2023mmWave}. In the testbed, the \ac{FPGA} of each \ac{SDR} is modified to continuously cross-correlate a 64-sample trigger waveform with the IQ samples (either on the transmitter or receiver path). Once the configured trigger waveform is detected, the FPGA starts counting down four configurable timers (i.e., $\timer[1]$, $\timer[2]$, $\timer[3]$, and $\timer[4]$) in units of samples, sequentially. During these countdowns, it enables or disables transmitter and receiver paths based on a configuration. For example, in one configuration, upon the detection instant of the trigger waveform,  an ED's SDR may be configured to acquire the IQ samples from the transceiver (i.e., AD9363) to a buffer in the FPGA (i.e., maximum 16384 samples) in the countdown $\timer[1]$, push the IQ data samples from Pluto's \ac{RAM} to the transceiver via \ac{DMA} in the countdown $\timer[2]$, and wait idle in the countdowns $\timer[3]$ and $\timer[4]$. In this work, we define two configurable trigger signals, denoted as $\triggerWaveform[0]$ and $\triggerWaveform[1]$, whose corresponding coefficients can be loaded into the correlator to switch the trigger waveform during the countdowns.

Another essential aspect is that the correlator can be configured to listen to the transmitter path, enabling dual use of the trigger signal. With this feature, a trigger signal from the ES can be utilized to trigger both the EDs in the network \textit{and} the ES's SDR to start IQ data acquisition. Therefore, the ES can obtain the relevant IQ data when the EDs transmit their signals after being triggered. The carefully chosen timer values, along with zero-padded IQ data to adjust the position of the actual IQ data  (see \figurename~\ref{fig:protocol} for an example), and the dual use of a trigger enable us to maintain time synchronization in the network and timely IQ data acquisition, where we leverage them for both \ac{UL} multiple-access and signal superposition. For instance, the EDs' sounding  for measuring $\compositeModel[\indexED]$ and test waveforms (i.e., a triangle function for visual inspection) are received back-to-back based on \ac{TDMA}, and the OAC signals are superposed with this strategy, as  in \figurename~\ref{fig:userInterface}. 

For phase synchronization, we adopt the \ac{PCP} strategy, introduced in \cite{sahin_PIMRC2025}. The \ac{PCP} is a low-complexity analog feedback method that corrects the total \ac{UL} and \ac{DL} phase mismatch and does not rely on channel reciprocity. In our implementation, we treat the trigger waveform as a preamble and use the aforementioned cross-correlator to estimate the phase rotation in the channel. Whenever the configured trigger waveform is detected, the FPGA registers the correlation output as the channel estimate (i.e., narrowband assumption) and calculates its angle. During the countdowns, the FPGA then rotates the IQ data on the transmitter path by the estimated angle (or its negated value, if configured) to provide feedback or correct the total phase mismatch during aggregation. Since the PCP strategy is implemented in the FPGA and independent of the transmitted IQ data, it is resilient to changes in the propagation environment while retaining the flexibility of SDR.

\figurename~\ref{fig:userInterface} and \figurename~\ref{fig:protocol} show the exchanges for the \ac{PCP} method in the testbed and the corresponding procedure, respectively. In \figurename~\ref{fig:userInterface},  the first portions of the received and transmitted IQ data show the trigger waveforms transmitted from the ES and the EDs for PCP at the ES side.\footnote{Since both transmitter and receiver paths of the ES are active during the PCP exchanges, the transmitted waveforms are also visible at the received IQ data, saturating the ES's receiver as the antennas are close to each other.} The correlators at the EDs and ES are configured to detect the trigger signal and estimate the amount of phase rotation. The procedure is as follows: 1)~The first waveform (slightly cropped in \figurename~\ref{fig:userInterface}) from the ES, i.e., $\triggerWaveform[0]$, triggers the EDs. At this stage, all EDs estimate the DL phase mismatch and push their IQ data to their transceivers. 2)~While the IQ data passes through the transceiver, the first ED's FPGA rotates the trigger signal $\triggerWaveform[1]$ by a factor of the DL phase mismatch. The transmitted trigger signal from the first ED is detected at the ES's correlator. 3)~The negated version of the estimated phase mismatch is then applied to the trigger signal transmitted from the ES. 4)~The first ED then re-estimates the phase error and applies the estimated phase to the sounding, OAC waveform, and the test waveform to achieve phase alignment at the ES location. The exchanges between the ES and other EDs proceed similarly, as shown in \figurename~\ref{fig:userInterface} and \figurename~\ref{fig:protocol}. The phase alignment can also be observed on the triangle waveform transmitted from each ED in \figurename~\ref{fig:userInterface}, where the PCP successfully aligns the phases of the \textit{received} triangular waveform at zero radian (i.e., the in-phase branch) while the imperfections are visible on the quadrature part of the received IQ data.

Note that the transmitted trigger signals in the \ac{UL} and \ac{DL} directions must be different from each other to avoid false alarms when an ED transmits a trigger. Hence, after the first trigger, the ES's FPGA loads the corresponding coefficients for $\triggerWaveform[1]$ and monitors the receiver path. The EDs' correlators always monitor $\triggerWaveform[0]$. Also, the triggers  are chirp signals \cite{sahin_dftsofdmChirp}.

\subsection{Signaling}
The EDs and ES communicate with each other using a custom \ac{OFDM}-based \ac{PPDU} at 920~MHz in both \ac{UL} and \ac{DL}. The PPDU has 192 active subcarriers, uses an \ac{IDFT} size of 256 and a \ac{CP} length of 64, and is based on \ac{BPSK} and a 1/2-rate polar code. We refer the reader to \cite{ashwini_milcom2024} for further details on the \ac{PPDU} structure and modulation/coding parameters. We set the sample rate to 5~Msps for all SDRs.

During measurements, the ES serves as the main controller and coordinates the EDs by sending commands via PPDUs. We define four commands: sounding, feedback, calibration, and aggregation. To indicate the command type, we allocate the first 4 bits in the payload. The following 32 bits, where each bit position is mapped to an ED ID, indicate the recipient EDs of the corresponding command. The EDs continuously monitor the medium and wait for the trigger signal (i.e., $\triggerWaveform[0]$) to acquire IQ data via the aforementioned trigger mechanism in the \ac{FPGA}. Thus, the ES always prepends $\triggerWaveform[0]$ to the PPDUs, as shown in \figurename~\ref{fig:protocol}. After the ED's FPGA detects the trigger signal, the EDs acquire the IQ data from the transceiver. The corresponding host computer then processes the IQ data to decode the PPDU. If it can decode the PPDU and it is a recipient, it acts based on the command type. The commands and the corresponding behaviors can be outlined as follows:
\subsubsection{Calibration command}
We use the calibration command for frequency synchronization in the network. The ES transmits a calibration command along with a trailer signal, which consists of $6$ \ac{OFDM} symbols (each carrying an upsampled Zadoff-Chu sequence of length 97 with a factor of $2$). If an ED is a recipient of the calibration command, it estimates the \ac{CFO} by processing the trailer and re-adjusts its carrier frequency in the UL, since we use the same carrier frequency in the UL and DL. To justify this approach, we also measure the residual \ac{CFO} after compensation through an \ac{UL} sounding. As shown in \figurename~\ref{fig:userInterface}, the estimated \acp{CFO}  in the UL after the correction are within the range of 0-50~Hz. In addition to the CFO correction, the EDs also adjust their receive gains to avoid saturation.
It is worth noting that Adalm Pluto uses a 40~MHz oscillator with a 25 PPM stability rating, which causes significant frequency drift and affects phase alignment. Hence, we initiate calibration any time before aggregation and maintain the residual CFO level in the UL for all EDs less than 50 Hz during the measurements. We do not feed back the residual CFO from the ES to the EDs in this work.

\subsubsection{Sounding command} 
We use the sounding command to measure the \ac{CFR} and the residual \ac{CFO} for each ED to assess frequency and amplitude synchronization. In response to a sounding command, a recipient ED transmits a low-\ac{PAPR} OFDM symbol, repeated four times (each carrying a Golay complementary sequence of length 192). To avoid overlapping, the EDs also prepend zero samples to the corresponding IQ data based on their IDs. Once another trigger signal transmitted from the ES is detected, the FPGA pushes the sounding data to the transceiver, and the ES receives all EDs' sounding signals back-to-back.

\subsubsection{Feedback command}
We use the feedback command to maintain amplitude synchronization across the network. By using the \acp{CFR} obtained from the sounding signals, the ES calculates how much each ED deviates from a target level at the center of the band. If the measured amplitude deviation of an ED exceeds the target value by 0.75 dB, the difference, with a resolution of 0.25 dB, is fed back to the EDs. During the measurements, we set the target level value to be $0.2$. For feedback, we allocate 32 bits (single-precision) per scalar for each recipient ED. Based on received feedback, a recipient ED adjusts the transmitter attenuation factor of Adalm Pluto. In \figurename~\ref{fig:userInterface}, we show the sounding results for each ED, and the absolute value of the CFR at the center of the band for each ED is approximately aligned at 0.2 after feedback. The amplitude mismatches are also visible in the test waveforms.

It is worth noting that the host-based solution is not an ideal approach for amplitude synchronization as it hinders timely feedback. We adopt a host-based approach because the FPGA resources on Pluto are significantly limited and are dedicated to time and phase alignment in our work. Secondly, we notice amplitude variations even when there is no apparent mobility on the platform, which may be due to automatic corrections or DC bias fluctuations in the AD9363 chipset. During the measurements, we did not fix such variations.

\subsubsection{Aggregation command}
We use the aggregation command to instruct an ED to prepare the OAC waveform, including a trigger for PCP, a sounding signal, and a test waveform, write it to RAM, and wait for a trigger signal from the ES. After the aggregation command, the ES transmits trigger waveforms based on the aforementioned PCP strategy, and the EDs push the corresponding IQ data from RAM to their transceivers. The ES receives the superposed OAC symbols along with the sounding and test waveforms.

In this work, we use \ac{DFT-s-OFDM} to transmit the OAC symbols. The  reason for this choice is that time synchronization is not precise and can deviate by $\pm1$ sample duration, causing phase rotations in the frequency domain. Since  phase impairments are detrimental to coherent OAC, we circumvent this issue by using a single-carrier variant, i.e., \ac{DFT-s-OFDM} \cite{sahin_flexibleDFTsOFDM}, where the symbols are carried in the time domain via a Dirichlet sinc kernel, at the expense of \ac{ISI} between OAC symbols. We address the \ac{ISI} by using a smaller \ac{DFT} precoder size (i.e., 32), which extends the main lobe of the Dirichlet sinc kernel, allowing the matched filter to absorb the sample deviations. Note that this approach does not reduce the computation rate, as multiple \ac{DFT} precoders can be stacked in the frequency domain.

During the measurements, we apply $3$~dB and $8$~dB power back-off to the normalized sounding and OAC waveforms, respectively. Since the OAC waveform (32 subcarriers) is 6 times narrower than the sounding waveform (192 subcarriers), the OAC waveform is approximately $1.4$ times stronger at the ES side as compared to the sounding waveform. Thus, the OAC waveform is approximately aligned at $\targetAlignment=0.28$, as shown in the sounding results in \figurename~\ref{fig:userInterface}.

\begin{figure*}[t]
	\centering
	\subfloat[Empirical distribution of $\xi_k$.]{
	\includegraphics[width = \figuresizeSS]{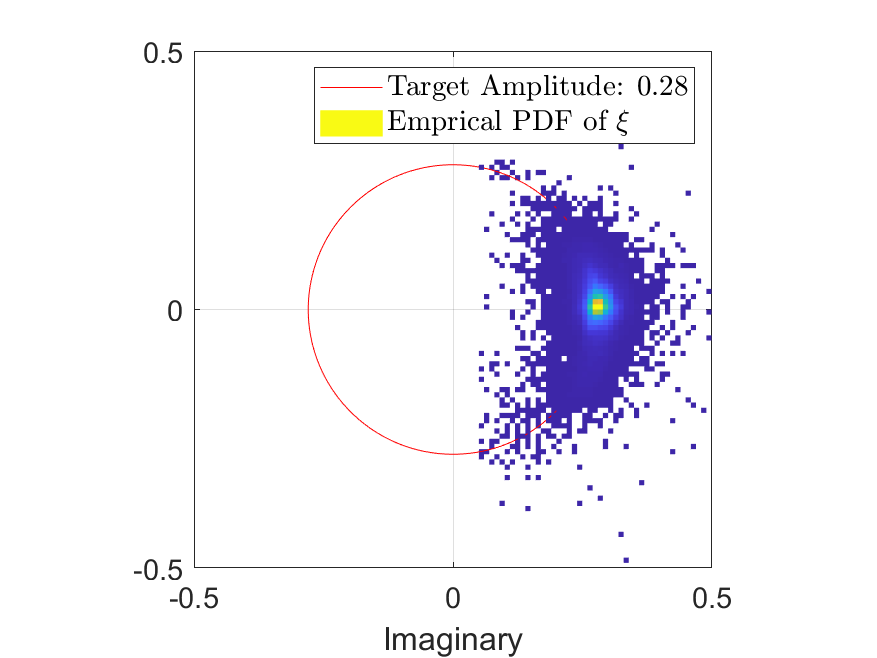}
	\label{subfig:ontheplane}
	}
	\subfloat[Distribution of centralized $|\xi_k|$.]{
	\includegraphics[width = \figuresizeSS]{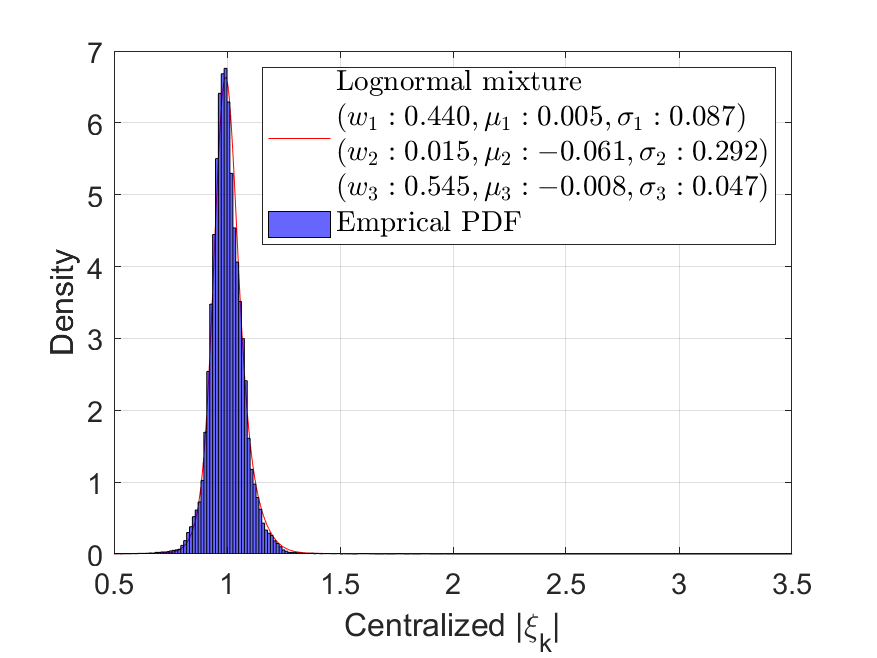}
	\label{subfig:abs}
	}
	\subfloat[Distribution of centralized $\angle{\xi_k}$.]{
	\includegraphics[width = \figuresizeSS]{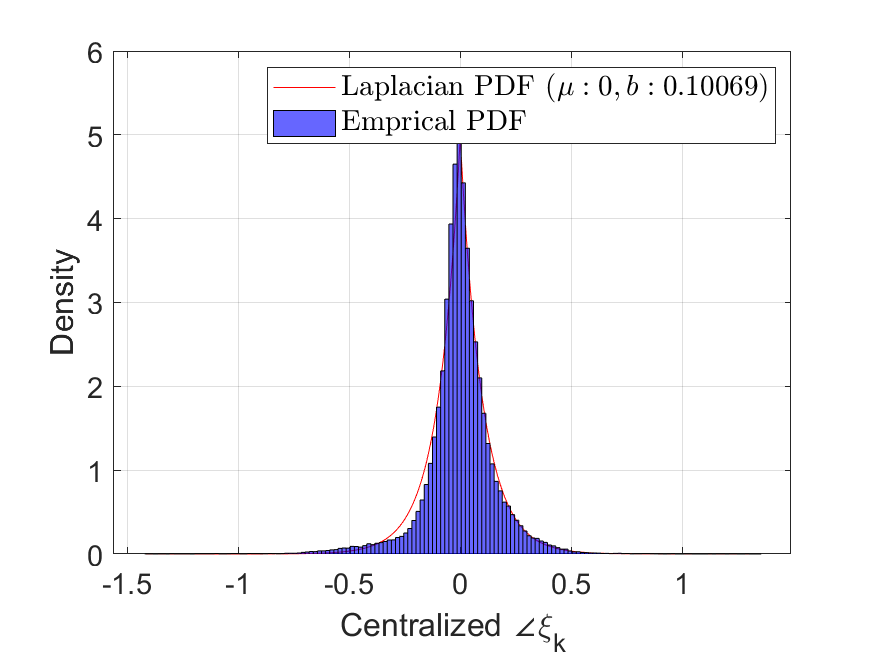}
	\label{subfig:angle}
	}	
	\subfloat[Measured CER in the testbed.]{
	\includegraphics[width = \figuresizeSS]{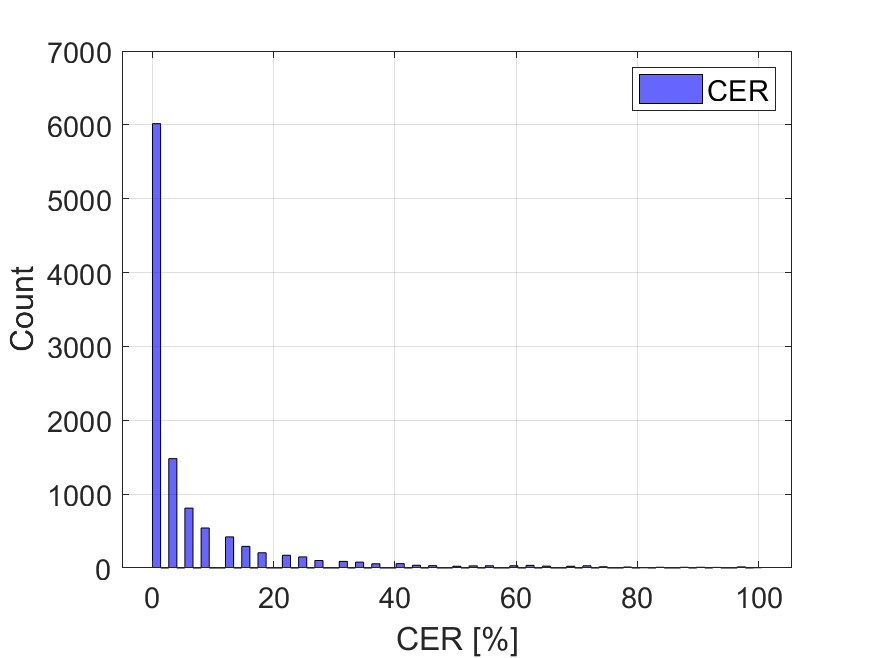}
	\label{subfig:cer}
}	

	\caption{Distribution of $\compositeModel[\indexED]$ based on the measurements for $\targetAlignment=0.28$ with zero radian phase and the fitted models for phase and amplitude distributions.}
	\label{fig:impairmentDis}
\end{figure*}
\subsection{Measurements}
\label{subsec:model}
During the measurements, we use the  protocol shown in \figurename~\ref{fig:protocol}. The ES first performs a sounding procedure to measure the amplitude alignment in the network. If the absolute value of the CFR of an ED deviates 0.75 dB more than the target level (i.e., $0.2$), the ES provides feedback regarding the attenuation factors for the corresponding ED. It then transmits a calibration command for frequency synchronization and re-measures the amplitude alignment with another sounding procedure. If no more feedback is required for amplitude synchronization, it transmits an aggregation command. Each ED loads its OAC waveform along with a trigger for the PCP, a sounding signal, and a test waveform into the SDR’s \ac{RAM}, and then waits for a trigger signal from the ES. After triggering, the ES measures $\compositeModel[\indexED]$ for all EDs through the sounding signals transmitted before the OAC symbols. If the synchronization is valid, it calculates the CER and records the results. We consider a valid synchronization if $\Re\{\compositeModel[\indexED]\}\ge5\times\Im\{\compositeModel[\indexED]\}$ and $|{\compositeModel[\indexED]}|\ge0.05$, $\forall\indexED$, hold.

The empirical distribution of $\compositeModel[\indexED]$ and the fitted models are given in \figurename~\ref{fig:impairmentDis} for 10000 measurements. In \figurename~\ref{fig:impairmentDis}\subref{subfig:ontheplane}, we show the empirical \ac{PDF} of the measured $\compositeModel[\indexED]$ with a clear scattering around the target value of $0.28$. In \figurename~\ref{fig:impairmentDis}\subref{subfig:abs} and \figurename~\ref{fig:impairmentDis}\subref{subfig:angle}, we show the distribution of the centralized amplitude, i.e., $|\compositeModel[\indexED]|/\arithmeticMean[{|\compositeModel[\indexED]|}]$, and phase, i.e., $\angle\compositeModel[\indexED]-\arithmeticMean[{\angle\compositeModel[\indexED]}]$, along with the fitted models. 
In this work, we model the centralized amplitude  as a mixture of three log-normal random variables, where the corresponding mean and standard deviations are given by $(0.005,-0.061,-0.008)$ and $(0.087,0.292,0.047)$, respectively, with the weights $(0.440,0.015,0.545)$, as in \figurename~\ref{fig:impairmentDis}\subref{subfig:abs}. For the centralized phase, we consider a Laplacian distribution with the mean $0$ and the scale $0.10069$, as shown in \figurename~\ref{fig:impairmentDis}\subref{subfig:angle}. %We use these models in our numerical results to estimate the \ac{CER} in practice.%, where we inherently assume that the angle and amplitude of  $\compositeModel[\indexED]$ are uncorrelated with each other. 

Finally, we provide the histogram of the measured \ac{CER} for the sum operation with $\numberOfDigits=1$, $\numberOfEdgeDevices=10$, and $\OACsymbolSize=8$ in \figurename~\ref{fig:impairmentDis}\subref{subfig:cer}. While the measured CER is $0\%$ in the majority of the cases, the computation can still fail due to the impairments.

\subsection{Enhanced Decoder}
Based on our observations in the testbed, it is likely that several nodes will be out-of-synch for a large $K$. Since even a single out-of-synch node can deteriorate the \ac{CER}, we now question if it is possible to improve the \ac{CER} via a better decoder when the channels, i.e., $\{\compositeModel[\indexED]\}$, are available at the fusion node. At the expense of increased complexity, one way of reducing \ac{CER} under impairments is a joint histogram and multi-user detection over the OAC symbols. To this end, we re-express \eqref{eq:superpositionPractice} as
\begin{align}
	\receivedSequence =\underbrace{\sum_{\indexED\notin\setOfNodesForDetector} \compositeModel[\indexED]\OACsymbol[{\symbol[\indexED]}]}_{{\text{in-synch}}}+ \underbrace{\sum_{\indexED\in\setOfNodesForDetector} \compositeModel[\indexED]\OACsymbol[{\symbol[\indexED]}]}_{\text{out-of-synch}}+\noiseVector~,
	\label{eq:superpositionPracticeDecoder}
\end{align}
where $\setOfNodesForDetector$ is a set of nodes that are considered as out-of-synch by the fusion node. Let  $\numberOfNodesForDetector$ denote the cardinality of $\setOfNodesForDetector$. Heuristically, we assume that the fusion node identifies $\setOfNodesForDetector$ by sorting  $|\compositeModel[\indexED]-\targetAlignment|$ and choosing $\numberOfNodesForDetector$ largest ones. By exploiting the function symmetry, we detect the histogram for $\numberOfEdgeDevices-\numberOfNodesForDetector$ nodes and the  symbols for $\numberOfNodesForDetector$ nodes jointly as
\begin{align}
	&\{\histogramDetected[],\symbolDetected[\indexED,\forall\indexED\in\setOfNodesForDetector] \}=\arg\min_{\substack{\histogram[]\in\setOfHistograms[\numberOfEdgeDevices-\numberOfNodesForDetector][\numberOfParameters]\\\symbol[\indexED,\forall\indexED\in\setOfNodesForDetector]}}\norm*{\receivedSequence - \targetAlignment\OACsymbolsMatrixOpt\histogram[] - \sum_{\indexED\in\setOfNodesForDetector} \compositeModel[\indexED]\OACsymbol[{\symbol[\indexED]}]}_2~. \nonumber
\end{align}
Thus, by restoring the histogram for $\numberOfEdgeDevices$ nodes, the decoder in \eqref{eq:MLdetector} can be re-expressed as
\begin{align}
	\decoder(\receivedSequence)=\functionHistogramArbitrary[{\histogramDetected[] + \sum_{\indexED\in\setOfNodesForDetector} \oneHotVector[{\symbolDetected[\indexED]}]}]~,
	\label{eq:MLdetectorImpairment}
\end{align}
where $\oneHotVector[i]$ denotes the vector of length $\numberOfParameters$ with a $1$ in the $i$th coordinate and $0$'s elsewhere.

The number of combinations for the histogram and  OAC symbols for $\numberOfNodesForDetector$ nodes can be calculated as  $\numberOfParameters^\numberOfNodesForDetector\times\binom{\numberOfEdgeDevices-\numberOfNodesForDetector+\numberOfParameters-1}{\numberOfParameters-1}$. Clearly, a larger $\numberOfNodesForDetector$ is not desirable as the complexity grows exponentially with $\numberOfNodesForDetector$.

\section{Numerical Results}
\label{sec:numerical}

\begin{figure}[t]
	\centering
	\includegraphics[width = \figuresize]{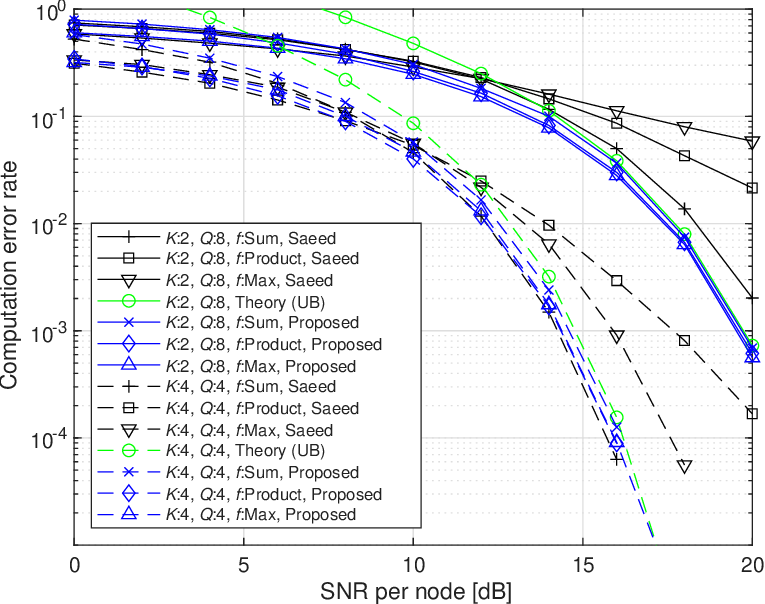}
	\caption{CER comparison ($\OACsymbolSize=1$).}
	\label{fig:CERperformanceComp}
\end{figure}

In this section, we numerically analyze the proposed construction,  considering both ideal synchronization and the impairment model introduced in Section~\ref{subsec:model}. We consider $\numberOfEdgeDevices\in\{2,4,10,20\}$ nodes, $\numberOfParameters=\{2,4,8\}$ parameters, $\OACsymbolSize\in\{1,2,4,8\}$ dimensions, and $\numberOfDigits\in\{1,2,4,8\}$ for  $\OACsymbolSize\numberOfDigits = \numberOfParameters$.  To demonstrate the usability of the proposed OAC symbols for an arbitrary symmetric function, we numerically calculate the \ac{CER} for  four different target functions, sum (i.e.,  $\sum_{\indexED=1}^{\numberOfEdgeDevices}\functionArgument[\indexED]$), product (i.e.,  $\prod_{\indexED=1}^{\numberOfEdgeDevices}\functionArgument[\indexED]$), maximum (i.e.,  $\max\{\functionArgument[1],\mydots,\functionArgument[\numberOfEdgeDevices]\}$), threshold function (i.e.,  $\indicatorFunction[{\sum_{\indexED=1}^{\numberOfEdgeDevices}\functionArgument[\indexED]}>\numberOfEdgeDevices(\numberOfParameters-1)/2]$) for $\functionArgument[\indexED]\in\{0,1,\mydots\numberOfParameters-1\}$ by using the \textit{same} constellation for a given $\OACsymbolSize$, $\numberOfDigits$, and $\numberOfEdgeDevices$. Without loss of generality, we set $\targetAlignment$ to be $1$ and sweep the \ac{SNR} per node (i.e., $\targetAlignment^2/\noiseVariance$) to obtain the \ac{CER} through Monte-Carlo simulations. For comparison, we use the OAC constellation optimized for specific functions (i.e., sum, product, and maximum) in \cite{Saeed_2024channelComp} and normalize them without centralization. For our results, we also plot the derived union bound in \eqref{eq:unionBoundCER} to justify our  results.

\def\spacingSub{2mm}
\begin{figure}
	\centering
	\subfloat[$\numberOfEdgeDevices=4$.]{\includegraphics[width = \figuresize]{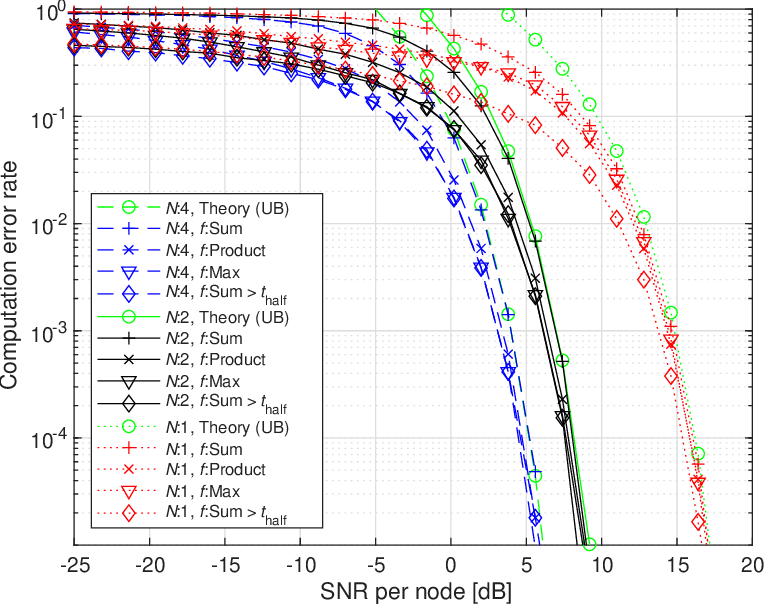}\label{subfig:CER_Q4K4}}\vspace{-\spacingSub}\\
	\subfloat[$\numberOfEdgeDevices=10$.]{\includegraphics[width = \figuresize]{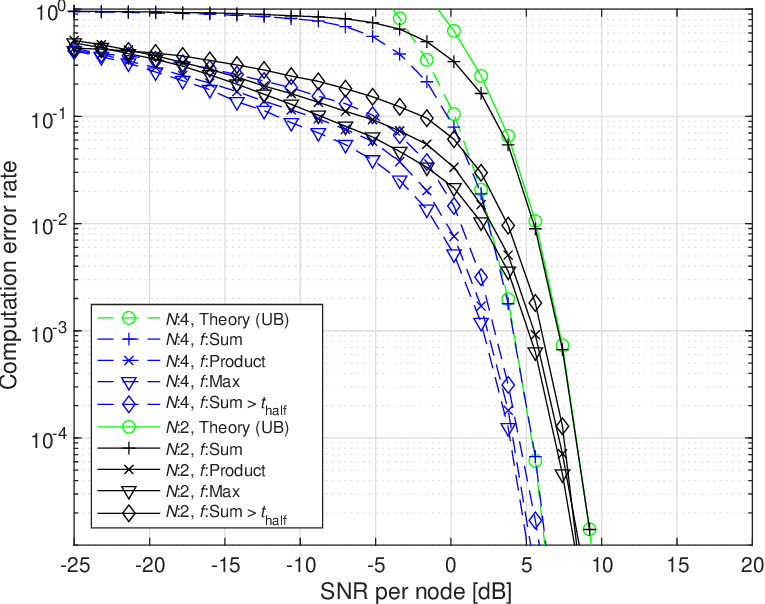}\label{subfig:CER_Q4K10}}\vspace{-\spacingSub}\\
	\subfloat[$\numberOfEdgeDevices=20$.]{\includegraphics[width = \figuresize]{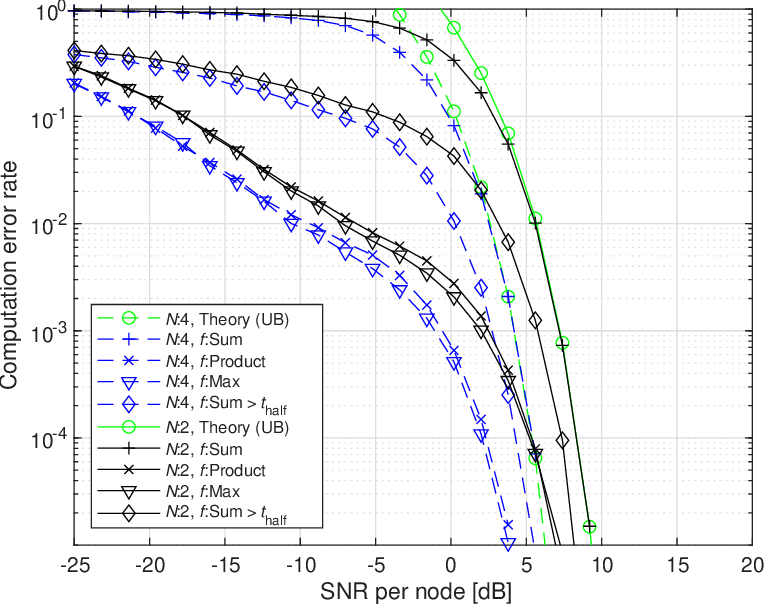}\label{subfig:CER_Q4K20}}\\	
	\caption{The CER performance for $\numberOfParameters=4$.}
	\label{fig:CERperformanceFixedQ4}
\end{figure}
\begin{figure}
	\centering
	\subfloat[$\numberOfEdgeDevices=4$.]{\includegraphics[width = \figuresize]{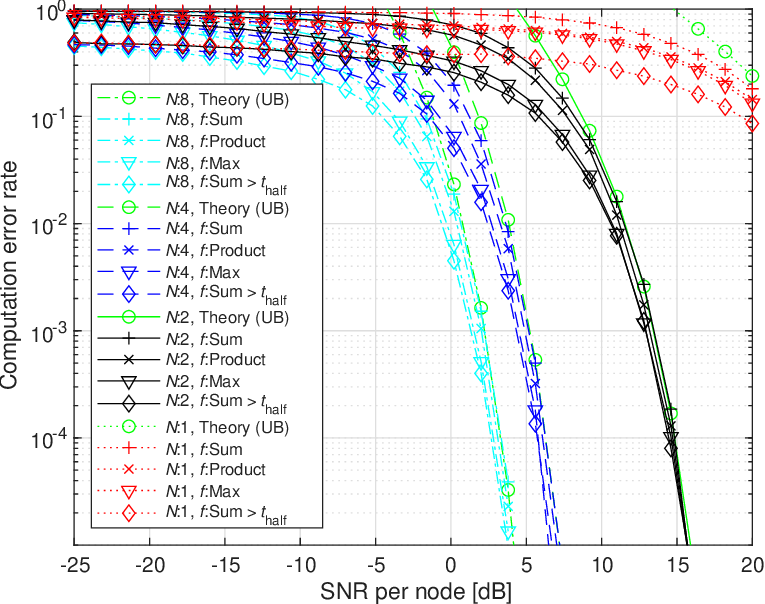}\label{subfig:CER_Q8K4}}\vspace{-\spacingSub}\\
	\subfloat[$\numberOfEdgeDevices=10$.]{\includegraphics[width = \figuresize]{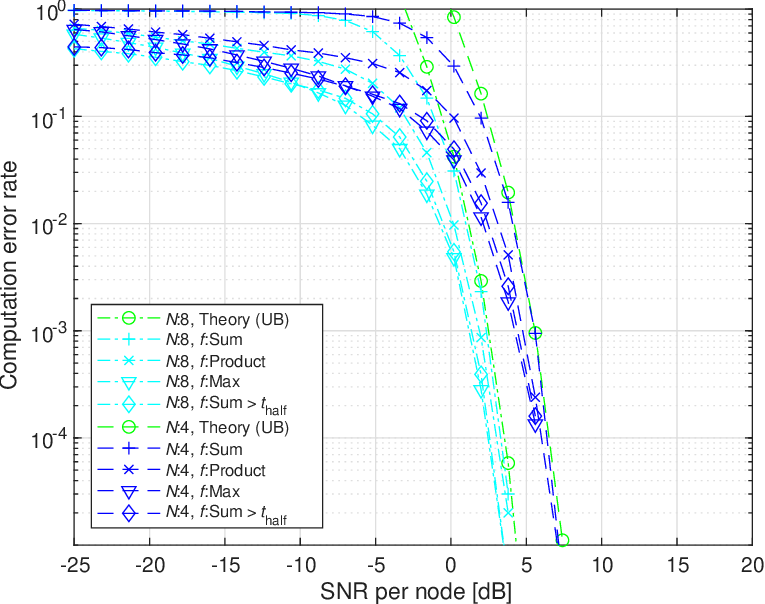}\label{subfig:CER_Q8K10}}\vspace{-\spacingSub}\\
	\subfloat[$\numberOfEdgeDevices=20$.]{\includegraphics[width = \figuresize]{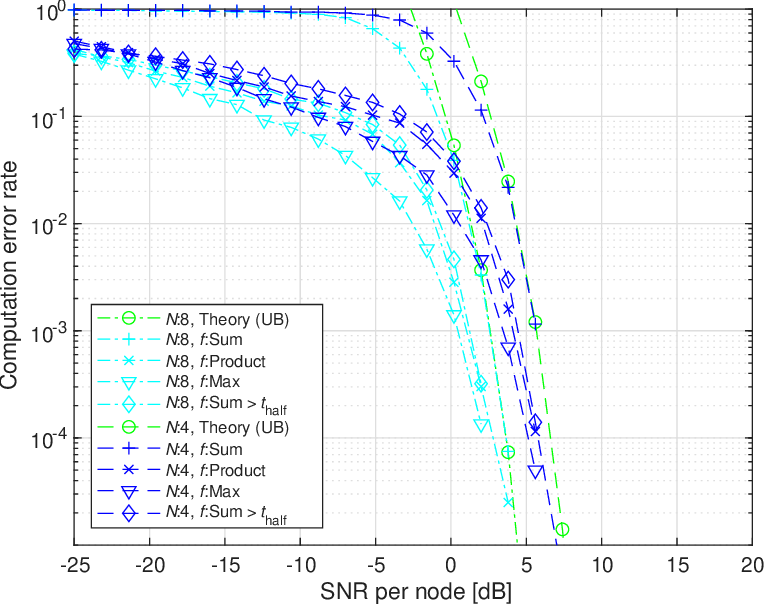}\label{subfig:CER_Q8K20}}\\	
	\caption{The CER performance for $\numberOfParameters=8$.}
	\label{fig:CERperformanceFixedQ8}
\end{figure}
In \figurename~\ref{fig:CERperformanceComp}, we consider the configurations used in \cite{Saeed_2024channelComp}, i.e., $(\numberOfEdgeDevices,\numberOfParameters)=(2,8)$ and  $(\numberOfEdgeDevices,\numberOfParameters)=(4,4)$ for $\OACsymbolSize=1$, implying that $\numberOfDigits=\numberOfParameters$  must hold for the proposed construction.  As shown in \figurename~\ref{fig:CERperformanceComp},  although the proposed construction uses the same constellations (i.e., \figurename~\ref{fig:constellationK2}\subref{subfig:K2constellation_n1d8} for $\numberOfEdgeDevices=2$ and  \figurename~\ref{fig:constellationK4}\subref{subfig:K4constellation_n1d4} for $\numberOfEdgeDevices=4$) for different target functions, it exhibits a superior performance for $\numberOfEdgeDevices=2$ and $\numberOfEdgeDevices=4$, except an approximately $0.6$~dB degradation for the sum function for $K=4$. Such degradation is indeed expected as \cite{Saeed_2024channelComp} aims at the constellation optimization for a given function and can leverage the histograms mapped to the same category. For example, the optimization  in  \cite{Saeed_2024channelComp}  leads to  a \ac{PAM} constellation and exploits such mappings for the sum function.\footnote{For $K=2$, the proposed construction performs better than the constellation in [6] for the sum function, as our construction uses the complex plane.} Nonetheless, the proposed construction offers a practical advantage by allowing one to use the same constellation for different symmetric functions depending on the application.  Finally, we note that the derived upper bound in \eqref{eq:unionBoundCER} is well-aligned with the numerical results.

In \figurename~\ref{fig:CERperformanceFixedQ4} and \figurename~\ref{fig:CERperformanceFixedQ8}, we compare the performance of the proposed multi-dimensional OAC constellations for $\numberOfParameters=4$ and  $\numberOfParameters=8$, respectively, and consider a larger $\numberOfEdgeDevices$. We observe that increasing $\OACsymbolSize$ offers a significant improvement in \ac{SNR} in all cases. 
For example, for $\numberOfEdgeDevices=4$ nodes,  to obtain  a \ac{CER}  of $10^{-4}$, the required \acp{SNR}  are approximately $16$~dB, $8$~dB, and $5$~dB for $\OACsymbolSize=1$, $\OACsymbolSize=2$, and $\OACsymbolSize=4$, respectively, as  shown in  \figurename~\ref{fig:CERperformanceFixedQ4}\subref{subfig:CER_Q4K4}.
When $\numberOfParameters$ is doubled, i.e., the case in \figurename~\ref{fig:CERperformanceFixedQ8}\subref{subfig:CER_Q8K4}, the same \ac{CER} requires approximately  $15$~dB, $6$~dB and $3$~dB \acp{SNR} for $\OACsymbolSize=2$, $\OACsymbolSize=4$, and $\OACsymbolSize=8$, respectively. As emphasized in Section~\ref{sec:proposedConstruction}, the reliability for  $\OACsymbolSize=1$ in  \figurename~\ref{fig:CERperformanceFixedQ8}\subref{subfig:CER_Q8K4} is severely limited, which indicates that increasing $\numberOfDigits$ for a large $\numberOfEdgeDevices$ may not be a viable solution in practice. 
For this reason, we conduct our experiments  with a larger number of nodes for  $\numberOfDigits=1$ and $\numberOfDigits=2$. 
As can be seen from  \figurename~\ref{fig:CERperformanceFixedQ4}\subref{subfig:CER_Q4K10} and  \figurename~\ref{fig:CERperformanceFixedQ4}\subref{subfig:CER_Q4K20}, the proposed construction achieves a \ac{CER} of $10^{-4}$ at $8$~dB and $5.3$~dB \acp{SNR} for $\numberOfDigits=2$ and $\numberOfDigits=1$ (i.e., sum function), respectively, for both $\numberOfEdgeDevices=10$ and $\numberOfEdgeDevices=20$.  \figurename~\ref{fig:CERperformanceFixedQ8}\subref{subfig:CER_Q4K10} and  \figurename~\ref{fig:CERperformanceFixedQ8}\subref{subfig:CER_Q8K20} show a similar relative result, where the required \acp{SNR} drop to $6$~dB and $3.5$~dB for $\numberOfDigits=2$ and $\numberOfDigits=1$, respectively, for the same CER. In \figurename~\ref{fig:CERperformanceFixedQ4}\subref{subfig:CER_Q4K20} and  \figurename~\ref{fig:CERperformanceFixedQ8}\subref{subfig:CER_Q4K20}, we  observe that some of the target functions are resilient against noise. This is because the cardinalities of the corresponding category spaces for the target functions, like maximum or threshold, are considerably less than $\maxMultiplicty$. Hence, even if the detected histogram is incorrect, the detector maps the incorrect histogram to a correct category for these target functions, leading to a small \ac{CER}.

\begin{figure}
	\centering
	\subfloat[$\numberOfParameters=4$.]{\includegraphics[width = \figuresize]{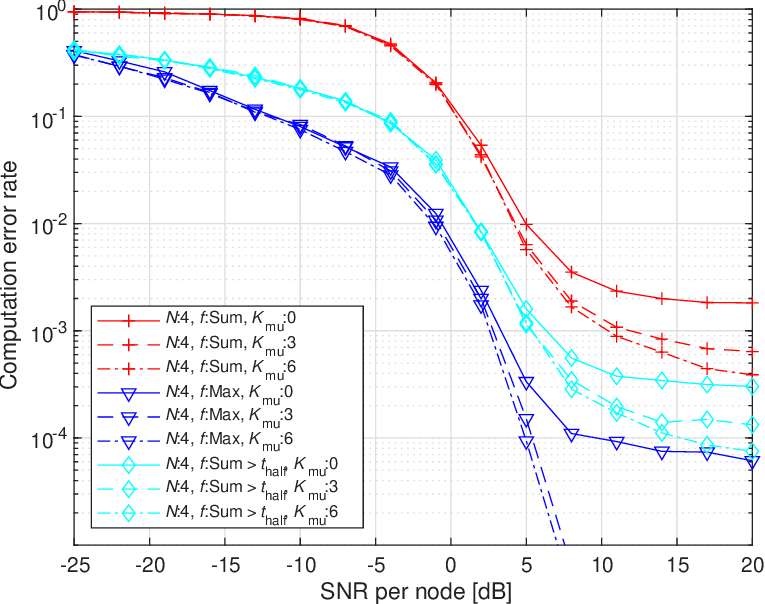}\label{subfig:CERimpairment_Q4K10}}\\
	\subfloat[$\numberOfParameters=8$.]{\includegraphics[width = \figuresize]{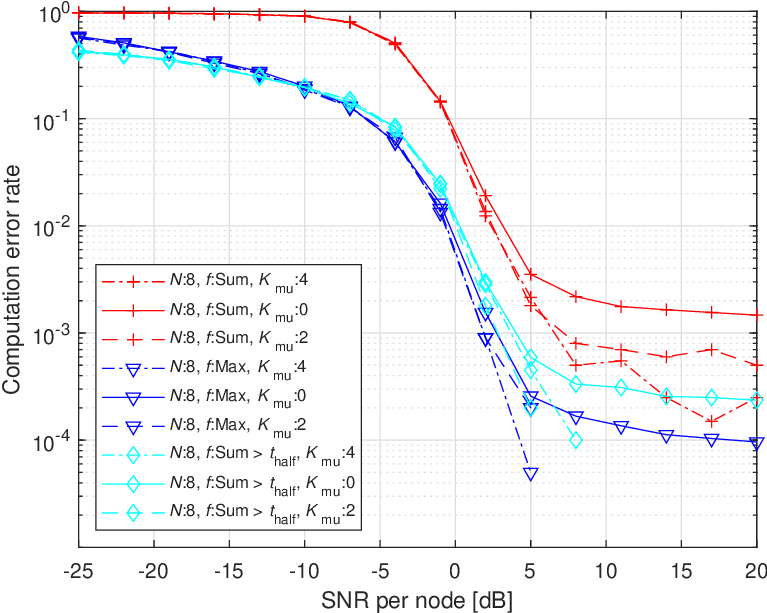}\label{subfig:CERimpairment_Q8K10}}\\
	\caption{The CER performance for $\numberOfEdgeDevices=10$.}
	\label{fig:CERperformanceImpairmentK10}
\end{figure}
Finally, in \figurename~\ref{fig:CERperformanceImpairmentK10}, we analyze the proposed scheme under the impairment model discussed in Section~\ref{subsec:model} for the enhanced decoder given in \eqref{eq:MLdetectorImpairment}. For this analysis, we consider $\numberOfEdgeDevices=10$ nodes and $\numberOfParameters\in\{4,8\}$ parameters for $\numberOfDigits=1$. We consider sum, maximum, and threshold functions, and evaluate the CER performance in \figurename~\ref{fig:CERperformanceImpairmentK10}\subref{subfig:CERimpairment_Q4K10} and \figurename~\ref{fig:CERperformanceImpairmentK10}\subref{subfig:CERimpairment_Q8K10} for $\numberOfParameters=4$ and $\numberOfParameters=8$, respectively, for a given $\numberOfNodesForDetector$. Compared with the corresponding results under ideal synchronization (i.e., \figurename~\ref{fig:CERperformanceFixedQ4}\subref{subfig:CER_Q4K10} and \figurename~\ref{fig:CERperformanceFixedQ8}\subref{subfig:CER_Q8K10}), we observe that synchronization impairment causes an error floor that depends on the target function, the number of parameters, and $\numberOfNodesForDetector$. For example, the error floors are about $2\times{10}^{-3}$, $3\times{10}^{-4}$, and $4\times10^{-5}$ for the sum, maximum, and threshold functions, respectively, for  $\numberOfParameters=4$ and $\numberOfNodesForDetector=0$, i.e., corresponding to the decoder in \eqref{eq:MLdetector}. By increasing $\numberOfNodesForDetector$, the error floor drops considerably at the expense of receiver complexity. For example, the error floor drops to  $4\times{10}^{-4}$ at $20$ dB \ac{SNR} for the sum function for $\numberOfNodesForDetector=6$, and the error floor is negligibly low for the maximum function in \figurename~\ref{fig:CERperformanceImpairmentK10}\subref{subfig:CER_Q4K10}. We observe similar trends  for $\numberOfParameters=8$, as shown in \figurename~\ref{fig:CERperformanceImpairmentK10}\subref{subfig:CER_Q8K10}.

\section{Concluding Remarks}
In this work, we propose a multi-dimensional symbol construction for digital OAC to compute \textit{any} symmetric function with a \textit{single} OAC constellation. After developing a theoretical framework for enumerating symmetric functions via their categorical representations and exploiting the sufficiency of the histogram of symbols for evaluating a symmetric function, inspired by TBMA, we construct multi-dimensional symbols that allow the detector to identify histograms for \textit{digital} function computation. Through both numerical results and theoretical assessments based on the union bound, we show that the proposed construction can provide a low \ac{CER} using the same OAC symbols for different functions, which can be beneficial for addressing diverse applications with a single design.

It is worth noting that the proposed construction is relevant to \ac{TBMA} as it uses histograms for computation. However, it differs from traditional \ac{TBMA} in that it uses histograms for digital function computation, rather than continuous statistical measures, and does not rely on orthogonal signaling. Furthermore, it allows one to pack more information in the multi-dimensional space via $\numberOfDigits$. However, our numerical results show that packing more information by using a large $\numberOfDigits$ for improved resource efficiency can degrade the \ac{CER} and limit the number of nodes in practice. In some cases (e.g., $\numberOfDigits\in\{1,2\}$), we show that the minimum distance is independent of the number of nodes, and that the corresponding \ac{CER} can be significantly reduced. For example,  a \ac{CER} of $10^{-4}$ can be achieved at approximately $3.5$ dB for $20$ nodes with $8$ parameters in an $8$-dimensional complex space. Also, the proposed approach obtains the corresponding multi-dimensional OAC symbols constructively for these cases.

The second major contribution of this work is the impairment model obtained based on realistic measurements, and the corresponding testbed. The testbed establishes a flexible framework for realistic measurements of methods that require synchronization in the UL direction, e.g., distributed antenna systems. Since it is based on low-cost off-the-shelf SDRs and host computers, it can also be replicated and expanded. It maintains time, frequency, phase, and amplitude synchronization in the network without using any auxiliary synchronization methods (e.g., \ac{GPS} or cables) to enable plausible measurements. For time and phase synchronizations, we implement a flexible trigger method mechanism in the FPGA of the SDRs and use the \ac{PCP} strategy. For procedures and coordination, such as \ac{UL} power control, we use a custom IEEE 802.11-like \ac{OFDM}-based \ac{PPDU}. In this work, we specifically model the amplitude and phase distortions on the coherent superposition by measuring the composite channel response at each node during the aggregation, which can be useful to assess the OAC schemes in practice. We then apply the obtained impairment model to the proposed multi-dimensional OAC symbols. Our results show that the impairments, unfortunately, cause an error floor in \ac{CER}. To address the error floor, we propose a joint multi-user and histogram detection and show that it can reduce the CER at the expense of receiver complexity, indicating a need for more research on low-complexity receivers.

Our efforts to realize coherent OAC in practice also shed light on several issues, opening new directions for future research. One major issue is that the probability of maintaining synchronization of \textit{all} nodes decreases with the number of nodes. Hence, there is still a need for an OAC scheme resilient against imperfection, potentially leveraging \textit{redundancy across the nodes} through coded function computation. One practical issue is the dynamic range of the superposed signal. Due to coherent aggregation, the superposed OAC signal can exceed the dynamic range of the \ac{ADC}, particularly with a large number of nodes. To circumvent this issue, one solution is to reduce the transmit power of each node, which, however, penalizes the \ac{SNR} per node and results in a higher \ac{CER}. On the contrary, if the receiver gain of the fusion node is reduced, it penalizes the strength of signals on orthogonal resources, e.g., the sounding signal. Addressing the dynamic range via new sampling strategies, e.g., modulo sampling \cite{Mulleti_2024}, or a low-PAPR design \textit{under} superposition are several interest angles that can be studied. We finally emphasize that amplitude synchronization is as challenging as phase synchronization and may be compromised by imperfections.

\bibliographystyle{IEEEtran}
\bibliography{references}

\end{document}